\documentclass[final,5p,times,twocolumn,authoryear]{elsarticle}

\usepackage{amssymb}
\usepackage{amsmath}

\journal{Systems \& Control Letters}

\usepackage{amsmath,amsfonts, amsthm}
\usepackage{algorithmic}
\usepackage{array}
\usepackage[caption=false,font=normalsize,labelfont=sf,textfont=sf]{subfig}
\usepackage{textcomp}
\usepackage{stfloats}
\usepackage{url}
\usepackage{verbatim}
\usepackage{graphicx}
\usepackage{tikz}
\usetikzlibrary{calc, arrows.meta}
\usepackage[hidelinks]{hyperref}
\usepackage{glossaries}
\usepackage{enumitem}

\newtheorem{theorem}{Theorem}
\newtheorem{definition}{Definition}

\newtheorem{corollary}{Corollary}
\newtheorem{proposition}{Proposition}

\newtheorem{remark}{Remark}

\newcommand{\R}{\mathbb{R}}
\newcommand{\C}{\mathbb{C}}

\renewcommand{\Re}[1]{\mathfrak{Re}(#1)}
\newcommand{\trans}{*}

\newcommand{\w}{\omega}

\newcommand{\sym}[1]{\textnormal{H}\{#1\}}
\renewcommand{\skew}[1]{\textnormal{S}\{#1\}}
\newcommand{\spec}[1]{\textnormal{spec}\{#1\}}
\renewcommand{\ker}[1]{\textnormal{ker}\{#1\}}
\newcommand{\image}[1]{\textnormal{im}\{#1\}}
\newcommand{\lmin}[1]{\lambda_{\min}\{#1\}}
\renewcommand{\det}[1]{\mathrm{det}\{#1\}}
\newcommand{\adj}[1]{\mathrm{adj}\{#1\}}

\newcommand{\ifp}{\nu}
\newcommand{\ofp}{\rho}
\newcommand{\Ifp}{\nu'}
\newcommand{\Ofp}{\rho'}
\newacronym{ifp}{IFP}{input feedforward passivity}
\newacronym{ofp}{OFP}{output feedback passivity}
\newacronym{if-ofp}{IF-OFP}{input feedforward-output feedback passive}
\newacronym{fifp}{$\w$-IFP}{frequency-dependent Input Feedforward Passivity}
\newacronym{fofp}{$\w$-OFP}{frequency-dependent Output Feedback Passivity}

\begin{document}

\begin{frontmatter}

\title{A Feedback Stability Theorem for Frequency-dependent Compensation of Excess and Lack of Passivity}

\author[ABB]{\corref{cor1}Pol Jane-Soneira} 
\cortext[cor1]{Corresponding author}
\author[ABB]{Gösta Stomberg}
\author[ABBCH]{Ognjen Stanojev}
\author[ABBCH]{Orcun Karaca}
\author[ABBSE]{Lennart Harnefors}

\affiliation[ABB]{organization={ABB Corporate Research},
	addressline={Kallstadter Str. 1, 68309 Mannheim}, 
	country={Germany}}
\affiliation[ABBCH]{organization={ABB Corporate Research},
	addressline={Segelhofstr. 1K, 5405 Baden}, 
	country={Switzerland}}
\affiliation[ABBSE]{organization={ABB Corporate Research},
	addressline={72226 Västerås}, 
	country={Sweden}}

\begin{abstract}
This article studies the stability of feedback interconnections of linear time-invariant systems based on frequency-dependent passivity indices.
Using these frequency-dependent passivity indices, we show that the feedback interconnection of two systems can be certified to be stable even if both systems have a lack of passivity in terms of their scalar passivity indices.
The main contribution of this paper is a new stability theorem based on frequency-dependent passivity indices.
Moreover, we discuss the connection of the proposed feedback stability theorem to prior results based on scalar passivity indices.
A numerical case study showcases the advantages of frequency-dependent passivity indices over scalar indices for feedback interconnections of linear systems.
\end{abstract}

\begin{keyword}
passivity, feedback stability, decentralized control, linear systems


\end{keyword}

\end{frontmatter}
\let\thefootnote\relax\footnotetext{\emph{Email addresses}: \{pol.jane-soneira, goesta.stomberg\}@de.abb.com, \{ognjen.stanojev, orcun.karaca\}@ch.abb.com, lennart.harnefors@se.abb.com}


\section{Introduction} \label{sec:intro}

Based on the notion of dissipative dynamical systems first proposed by \cite{willems1972dissipative}, passivity has been established as a powerful concept for the analysis and design of networks of interconnected systems~\citep{arcak_networks_2016}.
For instance, passivity allows to infer stability of interconnected systems based on conditions that are formulated for the individual subsystems.
Crucially, these conditions can be verified on a subsystem level, which simplifies the stability assessment of large-scale interconnected systems, since a full system-level analysis is thus not required. 
Standard passivity theorems assume that each subsystem is passive to guarantee stability of the interconnected system~\citep{sepulchre_constructive_1997, vanderschaft_l-gain_2017}, which limits their applicability, since many systems are not passive in practical applications.

This shortcoming can be addressed through the concepts of \gls{ifp} and \gls{ofp}.
These concepts were first introduced by \cite{hill_stability_1976} to further characterize the stability properties of dissipative systems. 
The associated scalar \gls{ifp} and \gls{ofp} indices quantify the excess or lack of passivity of a system and allow to guarantee the stability of the interconnected system if the excess of passivity of one system compensates the lack of passivity of another system, e.g., \cite[Theorem 2.30]{sepulchre_constructive_1997}. 
Scalar passivity indices have been successfully applied in various domains, including power systems~\citep{yang_distributed_2020}, power electronics~\citep{chen_limitations_2024}, event-triggered control~\citep{zhu2016passivity}, or optimization~\citep{li2020input}. 
Nowadays, there exists a comprehensive theory about stability for interconnected systems exploiting scalar passivity indices, see, e.g.~\citep{sepulchre_constructive_1997,vanderschaft_l-gain_2017,lozano2013dissipative,moylan_stability_1978}.

To reduce the conservatism of scalar passivity indices, frequency-dependent passivity indices for linear systems have been introduced in areas such as robust control \citep{wen_robustness_1988} and power electronics \citep{chen_limitations_2024,harnefors2015passivity}. These indices capture excess and lack of passivity across frequencies and therefore provide a finer characterization than scalar indices. In particular, scalar passivity indices can be recovered as the minimum (i.e., worst-case value) of the frequency-dependent indices over all frequencies~\citep{xia_sector_2020}. Consequently, a scalar index may indicate a passivity lack even when a system shows passivity excess over most frequencies, with lack occurring only in a narrow band.

Despite the advantageous properties of frequency-dependent passivity indices, a comprehensive theory including feedback stability theorems exploiting the frequency dependency has, to the best of the authors' knowledge, not been developed thus far. 
In power electronics applications, frequency-dependent passivity indices have been found to be useful for the stability analysis of grid-connected inverters, see, e.g.~\citep{harnefors2015passivity, chen_limitations_2024}. However, only specific systems are considered, and no general conditions on frequency-dependent passivity indices for the stability of a feedback interconnection are derived. Furthermore, when one of the systems has a lack of \gls{ofp}, the methods therein are not applicable. The stability analysis with frequency-dependent passivity indices is extended by~\cite{khong_feedback_2025} to more general frequency-dependent QSR dissipativity. This is used to analyze the feedback stability of mixed (i.e. positive real, bounded real and negative imaginary) systems. However, despite being general in the type of passivity, it suffers from the same drawback as the previous work. If one restricts QSR dissipativity to \gls{ifp} and \gls{ofp} indices, the method is not applicable whenever a system has a lack of \gls{ofp} index at some frequency.

\emph{Contributions:}
This paper develops a general feedback stability theorem based on frequency-dependent passivity indices.
In particular, our result can guarantee the stability of feedback interconnections even if the subsystems exhibit a lack of passivity, as long as the lack of passivity of the different subsystems occurs at different frequencies.
In contrast, prior results based on scalar passivity indices as in \citep{vanderschaft_l-gain_2017,sepulchre_constructive_1997,lozano2013dissipative}, capture the lack and excess of passivity across all frequencies and thus they make stronger assumptions on the subsystems to guarantee stability of the feedback interconnection.

\section{Preliminaries and Problem Statement}

\emph{Notation:} The set of (nonnegative) real numbers is denoted by $\R$ ($\R_{\geq 0}$) and the set of complex numbers is denoted by $\C$.
The transpose of matrix $A \in \C^{n\times n}$ is denoted by $A^\top$ and the complex-conjugate transpose of $A$ is denoted by $A^\trans$. The adjugate of $A$ is denoted by $\adj{A}$ and the determinant of $A$ is denoted by $\det{A}$.
We denote the hermitian part of a matrix $A \in \C^{n\times n}$ by $\sym{A} \doteq (A + A^\trans)/2$ and the skew-hermitian part of $A$ as $\skew{A} \doteq (A - A^\trans)/2$.
Recall that, for any $A \in \mathbb{C}^{n \times n}$, $A = \sym{A} + \skew{A}$.
Moreover, recall that $x^\trans \sym{A} x$ is real valued and $x^\trans \skew{A} x$ is purely imaginary for all $x \in \mathbb{C}^n$. 
The smallest eigenvalue of a hermitian matrix $A \in \C^{n\times n}$ is denoted by $\lambda_{\min}(A) \in \mathbb{R}$.
The spectrum, kernel, and image of $A$ are denoted by $\spec{A}$, $\ker{A}$, and $\image{A}$, respectively.
The Laplace transform variable is denoted by $s = \sigma + j\w \in \C$.
The Laplace transform of a time-domain signal $x(t)$ is denoted as $x(s)$ and the undamped Laplace or Fourier transform $x(j\w)$ is obtained by setting $s=j\omega$. $\Re{s}$ denotes the real part of $s$.

We consider Linear Time-Invariant (LTI) systems
\begin{subequations} \label{eq:ss}
	\begin{align}
		\dot{x}(t) &= A x(t) + B u(t), \qquad x(0) = 0, \\
		y(t) &= C x(t) + D u(t),
	\end{align}
\end{subequations}
where $x \in \R^n$ is the state, $u \in \R^m$ is the input, $y \in \R^m$ is the output, and the real matrices $A$, $B$, $C$, and $D$ are of appropriate dimension.
The transfer function of system~\eqref{eq:ss} is given by
\begin{equation}\label{eq:tf}
	H(s) = C(sI - A)^{-1}B + D \in \mathbb{C}^{m \times m},
\end{equation} where $s\in \mathbb{C}$ is the complex variable.
Note that transfer functions defined from state space realizations via~\eqref{eq:tf} are rational, i.e., all components of the matrix $H(s)$ are rational functions in~$s$.

\begin{definition}[Dissipativity~\citep{hill_stability_1976}] \label{def:dissipativity}
	System~\eqref{eq:ss} is said to be dissipative, if for all admissible inputs $u(\cdot)$ and all $t \geq 0$, there exists a supply rate $w: \mathbb{R}^m \times \mathbb{R}^m \rightarrow \mathbb{R}$ such that
	\[
	\int_0^t w(u(\tau),y(\tau)) \mathrm{d}\tau \geq 0. 
	\] 
	\hfill $\square$
	
\end{definition}

We next define further passivity notions as in~\cite[Chapter 2]{sepulchre_constructive_1997}, \cite{lozano2013dissipative,vanderschaft_l-gain_2017}.

\begin{definition}[Passivity and variants] \label{def:passivity}
    Consider the supply rate $w_{\textnormal{P}}(u,y) \doteq u^\top y - \Ifp y^\top y - \Ofp u^\top u$ with constants $\Ifp, \Ofp \in \R$. A system is called
    \begin{itemize}
        \item[(i)] passive if it is dissipative with respect to the supply rate $w_{\textnormal{P}}(u,y)$ with $\Ifp \geq 0$ and $\Ofp \geq 0$.
        \item[(ii)] \gls{ifp} if it is dissipative with respect to the supply rate $w_{\textnormal{P}}(u,y)$ with $\Ofp = 0$.
        \item[(iii)] \gls{ofp} if it is dissipative with respect to the supply rate $w_{\textnormal{P}}(u,y)$ with $\Ifp = 0$.
        \item[(iv)] \gls{if-ofp} if it is dissipative with respect to the supply rate $w_{\textnormal{P}}(u,y)$ with $\Ifp \in \R$ and $\Ofp \in \R$. 
    \end{itemize}
\end{definition}

\begin{remark}[Alternative definitions of passivity]
	In addition to a supply rate, the original notion of dissipativity introduced by \cite{willems1972dissipative} requires the existence of a \textit{storage function} which depends on the system state.
	The definition of dissipativity used here is due to~\cite{hill_stability_1976} and is more general, since it does not require the existence of a storage function.
	For passivity, even more general definitions can be stated for input-output maps without defining a state space \cite[Chapter 2]{vanderschaft_l-gain_2017}, \cite[Chapter 2]{lozano2013dissipative}.
	For state-space systems, the definitions using input-output maps or storage functions and supply rates are equivalent~\citep[Proposition 3.1.5]{vanderschaft_l-gain_2017}. An overview of the relation between the diferent definitions of passivity can be found in~\cite[Chapter 4]{lozano2013dissipative}.
	\hfill $\square$
\end{remark}

\begin{remark}[Scalar passivity indices]
	We refer to the constants $\Ofp$ and $\Ifp$ as scalar passivity indices.
	In particular, we note that $\Ofp$ and $\Ifp$ do not depend on the frequency $\w$. \hfill $\square$
\end{remark}

\begin{definition}[Properness]
	A system is said to be
	\begin{itemize}
		\item[(i)] proper, if $H(s) \rightarrow D$ as $|s| \rightarrow \infty$ for some $D \in \R^{m \times m}$;
		\item[(ii)] strictly proper, if it is proper with $D = 0$. \hfill $\square$
	\end{itemize} 
\end{definition}
The above definitions of proper and strictly proper systems are inspired by~\citep{skogestad2005multivariable} and~\citep{desoer_generalized_1980}.
However, we note that sightly different definitions for properness and strict properness exist, cf.~\citep{macfarlane_poles_1976}.

\begin{definition}[Pole and zero polynomials~\citep{macfarlane_poles_1976}] \label{def:poles_zeros}~\\	
	\begin{itemize}
		\item[(i)] The pole polynomial $p(s)$ of a square transfer function $H(s)$ is defined as the least common denominator of all minors of all order of $H(s)$.
		\item[(ii)] The zero polynomial $z(s)$ of a square transfer function $H(s)$ is defined as the numerator of the rational function that is equivalent to $\det{H(s)}$ and that has $p(s)$ as its denominator, i.e.
		\item[] %
		\begin{equation} \label{eq:poles_zeros}
			\frac{z(s)}{p(s)} = \det{H(s)}.
		\end{equation}
		\hfill $\square$
	\end{itemize}
\end{definition}
The poles and zeros of a transfer function $H(s)$ are defined as the roots of the pole and zero polynomials, respectively. 
Observe that these definitions for the poles and zeros ensure that no pole-zero cancellations can occur between separate matrix components of $H(s)$.
That is, if $c \in \mathbb{C}$ is a root of the denominator of a component $H_{ij}(s)$, $i,j \in \{1,\dots,m\}$, then $c$ is a pole of $H(s)$.

\begin{definition}[Stable and minimum phase transfer functions]\label{def:stable_tf}
	A transfer function $H(s) \in \mathbb{C}^{m \times m}$ is said to be
	\begin{itemize}
		\item stable, if all its poles have negative real part and
		\item minimum phase, if all its zeros have negative real part.
	\end{itemize}
	\hfill $\square$
\end{definition}

\begin{definition}[Stable feedback interconnections]
	\label{def:stable_feedback_interconnection}
	Consider the feedback interconnection of two systems shown in Figure~\ref{fig:feedback_interconnection} with transfer functions $H_1(s) \in \mathbb{C}^{m \times m}$ and $H_2(s) \in \mathbb{C}^{m \times m}$, respectively.
	The feedback interconnection is said to be stable, if the closed-loop transfer function $H_3(s)$ with $y_2(s) \doteq H_3(s) r(s)$ is stable. \hfill $\square$
\end{definition}

\begin{figure}[t]
	\centering
	\begin{tikzpicture}
		\node[draw, minimum width=1.5cm, minimum height=1cm] (H1) at (0,0) {$H_1(s)$};
		\node[draw, circle, inner sep=0.8mm] (sum) at ($(H1.west)+(-0.75,0)$) {};
		\node[draw, minimum width=1.5cm, minimum height=1cm] (H2) at (3,0) {$H_2(s)$};
		\draw[-latex] (sum) -- node[above] {$u_1$} (H1);
		\draw[-latex] (H1) -- node[above] {$y_1 = u_2$} (H2);
		\draw[-latex] (H2.east) -| ($(H2.east)+(0.75,-2)$) -- ($(sum)+(0,-2)$) -- (sum) node[below right] {$-$};
		\draw[-latex] ($(sum)+(-0.75,0)$) node[above]{$r$} -- (sum);
		\draw[-latex] ($(H2.east)+(0.75,0)$) -- node[above] {$y_2$} ($(H2.east)+(1.5,0)$);
	\end{tikzpicture}
	\caption{Feedback interconnection of $H_1(s)$ and $H_2(s)$.}
	\label{fig:feedback_interconnection}
\end{figure}

The main concern of this article are sufficient conditions on $H_1(s)$ and $H_2(s)$ that guarantee stability of the feedback interconnection shown in Figure~\ref{fig:feedback_interconnection}. 
Following the spirit of passivity theory, we seek conditions that can be verified independently for $H_1(s)$ and $H_2(s)$, respectively.

 \begin{remark}[Non-minimal state space realizations]
 	The careful reader may have noticed that we define stability via the poles of the transfer function $H(s)$ instead of the eigenvalues of the state matrix $A$ in~\eqref{eq:ss}.
 	This is motivated by power system applications where $H$ can serve as generalized impedance or admittance matrix-valued transfer functions~\citep{harnefors2015passivity}.
 	Crucially, we do not make any assumptions on the controllability of the pair $(A,B)$ or the observability of the pair $(A,C)$.
 	That is, we allow for non-minimal state space realizations~\eqref{eq:ss}.
 	Consequently, we only study the asymptotic behavior of $u_1(t)$, $u_2(t)$, $y_1(t)$, and $y_2(t)$ in Figure~\ref{fig:feedback_interconnection}, but we make no statements on the state evolutions $x_1(t)$ and $x_2(t)$, where $x_1$ and $x_2$ denote the states of systems one and two, respectively.
 	However, additional statements on the convergence of $x_1(t)$ and $x_2(t)$ can be obtained under additional zero-state detectability conditions~\cite[Def.~3.2.15]{vanderschaft_l-gain_2017}.
 	\hfill $\square$
\end{remark}

\section{Frequency-Dependent Passivity Analysis} \label{sec:main_result}

Inspired by applications in power electronics and power systems, we analyze the stability of the feedback interconnection shown in Figure~\ref{fig:feedback_interconnection} through so-called frequency-dependent passivity indices which we define next.

\subsection{Definition and Properties}

\begin{definition}[Frequency-dependent passivity indices] \label{def:freq-dependent-ofp-ifp}
	Consider a transfer function $H(s)$ and suppose that all zeros and poles of $H(s)$ have non-zero real part.
	Then, we define
	\begin{itemize}
		\item[(i)] the \gls{fifp} index of $H(s)$ at freqency $\omega \in \R_{\geq 0}$ as
		\begin{equation} \label{eq:def-ifp}
			\ifp(\omega) \doteq \frac{1}{2}\lmin{H(j\w) + H(j\omega)^\trans}.
		\end{equation}
		\item[(ii)] and the \gls{fofp} index of $H(s)$ at freqency $\omega \in \R_{\geq 0}$ as 
		\begin{equation}\label{eq:def-ofp}
			\ofp(\omega) \doteq \frac{1}{2} \lmin{H(j\w)^{-1} + (H(j\omega)^{-1})^{\trans}} 
		\end{equation}
		\hfill $\square$
	\end{itemize}
\end{definition}

Such frequency-dependent passivity indices are used in power electronics and power systems applications to assess the passivity properties of converters and grids in specific frequency ranges, cf.~\citep{harnefors2015passivity}.
However, as pointed out in Section~\ref{sec:intro}~and to the best of our knowledge, there does not yet exist a formal stability analysis considering these frequency-dependent passivity indices.
In comparison, there exists a wide array of stability results based on constant scalar, i.e. frequency-independent, passivity indices~\citep{sepulchre_constructive_1997,khalil_nonlinear_2002,lozano2013dissipative}.
We next summarize some properties of the frequency-dependent passivity indices that we use in the subsequent stability analysis.

\begin{remark}[Passivity indices and hermitian part of the transfer function]\label{rem:ifpofp}
	Note that $\ofp(\w) = \lmin{\sym{H(j\w)^{-1}} }$ and $\ifp(\w) = \lmin{\sym{H(j\w)} }$.
\end{remark}

\begin{proposition}[Sign of the \gls{fofp} and \gls{fifp} indices]\label{prop:same_sign_ifp_ofp}
	Consider a stable transfer function $H(s)$ whose zeros have non-zero real part and denote the associated \gls{fofp} and \gls{fifp} indices by $\ofp(\omega)$  and $\ifp(\omega)$, respectively.
	Then, for each frequency $\w \in \R_{\geq 0}$,
	\begin{align*}
		\ofp(\omega) > 0 &\iff \ifp(\omega) > 0 \quad \textnormal{and}\\
		\ofp(\omega) < 0 &\iff \ifp(\omega) < 0.
	\end{align*} \hfill $\square$
\end{proposition}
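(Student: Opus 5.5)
Fix $\w \in \R_{\geq 0}$ and write $M \doteq H(j\w)$. Because $H(s)$ is stable, $j\w$ is not a pole, so $M$ is a finite complex matrix. Because the zeros of $H(s)$ have non-zero real part, $j\w$ is not a root of $z(s)$. From \eqref{eq:poles_zeros} we get $\det{M} = z(j\w)/p(j\w) \neq 0$, so $M$ is invertible and both $\ifp(\w)$ and $\ofp(\w)$ are well defined.

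The key identity is a congruence:
\[
\sym{M^{-1}} = \tfrac12\big(M^{-1} + M^{-\trans}\big) = M^{-1}\,\tfrac12\big(M^\trans + M\big)\,M^{-\trans} = M^{-1}\sym{M}\,(M^{-1})^\trans .
\]
This follows by expanding $M^{-1}M^\trans M^{-\trans} = M^{-1}$ and $M^{-1} M M^{-\trans} = M^{-\trans}$. Thus $\sym{M^{-1}} = T\,\sym{M}\,T^\trans$ with $T = M^{-1}$ nonsingular.

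Now apply Sylvester's law of inertia for hermitian matrices under $*$-congruence. The numbers of positive, negative and zero eigenvalues of $\sym{M}$ and $\sym{M^{-1}}$ coincide. The sign statements then follow directly from Remark~\ref{rem:ifpofp}:
\begin{itemize}
\item $\ifp(\w) = \lmin{\sym{M}} > 0$ iff $\sym{M}$ is positive definite. By the congruence this holds iff $\sym{M^{-1}}$ is positive definite, i.e.\ iff $\ofp(\w) > 0$.
\item $\ifp(\w) < 0$ iff $\sym{M}$ has a negative eigenvalue. By inertia this holds iff $\sym{M^{-1}}$ has one, i.e.\ iff $\ofp(\w) < 0$.
\end{itemize}
If one prefers not to cite inertia, a direct argument works: for $x \neq 0$ put $y = T^\trans x \neq 0$, so that $x^\trans \sym{M^{-1}} x = y^\trans \sym{M} y$. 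Hence positive definiteness transfers both ways, and so does the existence of a vector with negative quadratic form.

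The only delicate point is well-posedness, namely the invertibility of $M$ used in the first step. This relies on the pole/zero polynomial definition guaranteeing $\det{H(j\w)} \neq 0$. Because $p(s)$ is the least common denominator of all minors, no pole–zero cancellation can hide a zero of $\det{H}$ at $j\w$, so $\det{H(j\w)} = 0$ would force $z(j\w) = 0$, which is excluded by hypothesis. Once invertibility is settled, the rest is linear algebra.
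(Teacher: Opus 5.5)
Your proof is correct and uses essentially the same idea as the paper: an invertible change of variables equating the quadratic forms of $\sym{H(j\w)}$ and $\sym{H(j\w)^{-1}}$ (the paper substitutes $y = Ax$, i.e.\ it uses the mirror-image congruence $\sym{A^{-1}} = (A^{-1})^\trans \sym{A} A^{-1}$). Your version is actually the more careful of the two. You obtain invertibility of $H(j\w)$ from the zero hypothesis rather than from definiteness, and your inertia (existence-of-a-negative-direction) argument genuinely covers the indefinite case behind $\ifp(\w)<0 \iff \ofp(\w)<0$. The paper instead argues the negative sign via negative definiteness, which characterizes $\lambda_{\max}<0$ rather than $\lambda_{\min}<0$.
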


\begin{proof}
	We first prove in Part (a) of the proof that $\sym{H(j\w)}$ is positive (negative) definite if and only if $\sym{H(j\w)^{-1}}$ is positive (negative) definite.
	Then, we show in Part (b) that the statement of the proposition follows from~\autoref{def:freq-dependent-ofp-ifp}.
	
	Part (a): Consider a matrix $A \in \C^{n\times n}$ and suppose that $\sym{A}$ is positive definite.
	Then, \autoref{lem:pos_def_of_hermitian_implies_pos_real_part} asserts that all eigenvalues of $A$ have positive real part. Thus, $A$ is regular and  $\Re{x^\trans A x} > 0 $ for all $x \in \C^n \setminus \{0\}$. Further recall that $\Re{z} = (z+z^\trans)/2$ for all $z \in \C$ and hence
	\begin{equation} \label{eq:complex_number_real_part_identity}
		\Re{x^\trans A x} = \frac{1}{2}(x^\trans A x + (x^\trans A x)^\trans) =\Re{x^\trans A^\trans x} \quad \forall x \in \C^n.
	\end{equation}
	Let $y = Ax$ for any $x \in \C^n$ and note that $y \neq 0$ if and only if $x \neq 0$ because $A$ is regular.
	Then, inserting $y = Ax$ into \eqref{eq:complex_number_real_part_identity} yields
	\begin{equation*}
		\Re{y^\trans A^{-1} y } = \Re{x^\trans A^\trans A^{-1} A x} = \Re{x^\trans A^\trans x} = \Re{x^\trans A x} > 0
	\end{equation*} for all $x \in \C^n$.
	Hence,
	\begin{equation*}
		 y^\trans \sym{A^{-1}} y = \frac{1}{2} \left(y^\trans (A^{-1})^{\trans} y + y^\trans A^{-1} y\right) = \Re{y^\trans A^{-1} y } > 0 
	\end{equation*} 
	for all $y \in \C^n$, i.e., $\sym{A^{-1}}$ is positive definite.
	Note that the same arguments hold in the opposite direction, since $(A^{-1})^{-1} = A$. That is, if  $\sym{A^{-1}}$ is positive definite, so is $\sym{A}$.
	Additionally, the same arguments hold for negative definiteness due to \autoref{lem:pos_def_of_hermitian_implies_pos_real_part}.
	
	Part (b): Finally, we relate the last lines to the statement of the proposition via \autoref{def:freq-dependent-ofp-ifp}. We have that $\lmin{\sym{H(j\w)} }$ is positive (negative) if and only if $\lmin{\sym{H(j\w)^{-1}} }$ is positive (negative).
	Thus, by~\autoref{def:freq-dependent-ofp-ifp}, $\ofp(\omega)$ is positive (negative) if and only if $\ifp(\omega)$ is positive (negative), cf.~\autoref{rem:ifpofp}. 
\end{proof}

By Proposition~\ref{prop:same_sign_ifp_ofp}, a lack of \gls{ifp}, i.e. a negative sign of the \gls{fifp} index, implies a lack of \gls{ofp}, and vice versa.
The same holds analogously for an excess of passivity, i.e. positive signs of the \gls{fifp} and \gls{fofp} indices.

\begin{proposition}[\gls{fifp} and \gls{fofp} for negative frequencies] \label{prop:negative_frequencies}
	Consider a transfer function $H(s)$ and suppose that all zeros and poles of $H(s)$ have non-zero real part.
	Then, $\ofp(\w) = \ofp(-\w)$ and $\ifp(\w) = \ifp(-\w)$ for all $\w \in \R$.
\end{proposition}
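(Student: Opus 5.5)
The plan is to use the fact that $H(s)$ comes from a state-space realization~\eqref{eq:ss} with \emph{real} matrices $A$, $B$, $C$, $D$. Consequently $H(\bar s) = \overline{H(s)}$ (entrywise complex conjugate) for every $s$ that is not a pole. Since all poles have non-zero real part, $j\w$ is never a pole, so $H(j\w)$ is well defined for all $\w \in \R$. The conjugation identity follows directly from $(\bar s I - A)^{-1} = \overline{(sI-A)^{-1}}$, because $A$ is real, together with the realness of $B$, $C$ and $D$. Applying it at $s=j\w$ gives $H(-j\w) = \overline{H(j\w)}$.

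For the \gls{fifp} index, I use \autoref{rem:ifpofp} to write $\ifp(-\w) = \lmin{\sym{H(-j\w)}} = \lmin{\sym{\overline{H(j\w)}}}$. The key step is $\sym{\overline{M}} = \overline{\sym{M}}$, which holds because conjugation commutes with taking the conjugate transpose. Then observe that a hermitian matrix $P$ and its conjugate $\overline{P}$ have the same spectrum. Indeed, if $Px = \lambda x$ with $\lambda$ real, then $\overline{P}\,\bar x = \lambda \bar x$. Equivalently, $\overline{P} = P^\top$, and transposition preserves eigenvalues. Hence $\ifp(-\w) = \ifp(\w)$.

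For the \gls{fofp} index, I first need $H(j\w)$ to be invertible for every real $\w$. This follows from the assumption that all zeros have non-zero real part: by \eqref{eq:poles_zeros}, $\det{H(j\w)} = z(j\w)/p(j\w)$, and $z(j\w) \neq 0$ while $p(j\w)\neq 0$, so $\det{H(j\w)} \neq 0$. This bookkeeping is the only mildly delicate point, and it is exactly why the hypothesis on the poles and zeros is imposed. I then note that $H(-j\w)^{-1} = \overline{H(j\w)}^{-1} = \overline{H(j\w)^{-1}}$, since conjugating both sides of $MM^{-1}=I$ gives $\overline{M}\,\overline{M^{-1}} = I$. The argument from the \gls{fifp} case then applies verbatim with $H(j\w)^{-1}$ in place of $H(j\w)$, yielding $\ofp(-\w) = \ofp(\w)$.

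I expect no substantial obstacle. The main point requiring care is making explicit that the real-coefficient structure of~\eqref{eq:ss} is what gives conjugate symmetry; this conjugate symmetry is the entire content of the statement.
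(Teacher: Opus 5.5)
Your proof is correct and follows essentially the paper's route. The paper also relies on the conjugate symmetry $H(j\w) = (H(-j\w)^\trans)^\top$, i.e.\ $H(-j\w)=\overline{H(j\w)}$, then uses the invariance of the spectrum under transposition, first for $H(j\w)$ and then for $H(j\w)^{-1}$. Your version is slightly more careful: you derive the symmetry from the realness of $A$, $B$, $C$, $D$ rather than from rationality alone, and you check explicitly that $H(j\w)$ is invertible.
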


\begin{proof}
	We first prove that $H(j\w) = (H(-j\w)^\trans)^\top$ for all $\w \in \R$. Then, we insert this into the definitions of $\ifp(\w)$ and $\ofp(\w)$ to prove the statements for \gls{fifp} and \gls{fofp}, respectively.
	
	Recall that $H(s)$ defined via~\eqref{eq:tf} is a rational transfer function matrix, i.e., all matrix elements of $H(s)$ are rational functions in $s$. Hence, $H(j\w) = (H(-j\w)^\trans)^\top$ for all $\w \in \R$.
	
	Inserting $H(j\w) = (H(-j\w)^\trans)^\top$ into the definition of $\ifp(\w)$ in~\eqref{eq:def-ifp} yields
	\begin{align}
		\nu(\w) &= \frac{1}{2} \lmin{(H(-j\w)^\trans)^\top + ((H(-j\w)^\trans)^\top)^\trans }. \label{eq:prop_pos_neg_freq}
 	\end{align} 
	Taking into account that the transposition and complex-conjugated transposition are commutative, we have for the second term in~\eqref{eq:prop_pos_neg_freq} that 
	\begin{equation} \label{eq:prop_pos_neg_freq_3}
		((H(-j\w)^\trans)^\top)^\trans = ((H(-j\w)^\trans)^\trans)^\top = H(-j\w)^\top.
	\end{equation}
	
	Inserting the right-hand side of \eqref{eq:prop_pos_neg_freq_3} into the second term of the $\lmin{}$ operator in~\eqref{eq:prop_pos_neg_freq}, we have
	\begin{align}
		\nu(\w)	&=\frac{1}{2} \lmin{(H(-j\w)^\trans)^\top + H(-j\w)^\top }. \label{eq:prop_pos_neg_freq_2}
 	\end{align}
	Observing that $\spec{A} = \spec{A^\top}$ for all $A \in \C^{n\times n}$, \eqref{eq:prop_pos_neg_freq_2} reads
	\begin{align*}
		\nu(\w) &=\frac{1}{2} \lmin{H(-j\w)^\trans + H(-j\w) } = \ifp(-\w),
 	\end{align*}
 	which completes the proof for the \gls{fifp} index. 
	
	Regarding the \gls{fofp} index: Note that since $H(s)$ is rational, $H(s)^{-1}$ is also rational, and hence $H(j\w)^{-1} = ((H(-j\w)^{-1})^\trans)^\top$. Inserting this into the definition of $\ofp(\w)$ in~\eqref{eq:def-ofp} yields
 	\begin{align*}
 		\ofp(\w) &= \frac{1}{2} \lmin{((H(-j\w)^{-1})^\trans)^\top + (((H(-j\w)^{-1})^{\trans})^\top)^\trans}\\
 		& = \ofp(-\w),
 	\end{align*}
	where the last step follows with the same arguments as above. This completes the proof.
\end{proof}

\subsection{Connection to Scalar Passivity Indices}

\begin{proposition}[Relation between scalar and frequency-dependent passivity indices]\label{prop:connection_frequency_ofp}
	Consider an LTI system~\eqref{eq:ss} and its transfer function $H(s)$.
	Suppose that all zeros and poles of $H(s)$ have non-zero real part.
	Then, the following holds:
	
	\begin{itemize}
		\item[(i)] If $H(s)$ is stable, then the largest index $\Ifp$ such that system~\eqref{eq:ss} is IFP is given by $\Ifp = \inf_{\omega \in \R} \ifp(\omega)$.
		\item[(ii)] If $H(s)^{-1}$ is stable, then the largest index $\Ofp$ such that system~\eqref{eq:ss} is OFP is given by $\Ofp = \inf_{\omega \in \R} \ofp(\omega)$.
	\end{itemize}
\end{proposition}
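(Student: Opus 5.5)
The plan is to work in the frequency domain with Parseval's theorem. Note first a mismatch in the paper's conventions: with the supply rate $w_{\textnormal{P}}(u,y) = u^\top y - \Ifp y^\top y - \Ofp u^\top u$, the coefficient of $u^\top u$ is $\Ofp$. So ``IFP'' in Definition~\ref{def:passivity} means dissipativity with respect to $u^\top y - \Ifp y^\top y$, which is an output-weighted supply rate. I will match the statement by reading (i) as the input-weighted rate $u^\top y - \Ifp u^\top u$ and (ii) as $u^\top y - \Ofp y^\top y$. For (ii), the natural device is to apply (i) to the inverse system $H(s)^{-1}$ with input $y$ and output $u$. This is why stability of $H^{-1}$, i.e.\ minimum phase, is assumed there.

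\textbf{Step 1 (sufficiency, part (i)).} Let $\nu^\star \doteq \inf_{\w} \ifp(\w)$, which is finite because $H$ is stable and proper. Since $x(0)=0$, the supply integral over $[0,t]$ depends only on the truncated input $u_t = u\cdot \mathbf 1_{[0,t]}$. This holds by causality: $y$ on $[0,t]$ depends only on $u_t$.

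So take $u_t \in L_2$. Because $H$ is stable, $y = H u_t \in L_2$, and Parseval gives
\[
\int_0^\infty y^\top u_t \,\mathrm d\tau = \frac{1}{2\pi}\int_{-\infty}^{\infty} u_t(j\w)^\trans \sym{H(j\w)} u_t(j\w)\,\mathrm d\w .
\]
By Remark~\ref{rem:ifpofp}, $\sym{H(j\w)} \succeq \ifp(\w) I \succeq \nu^\star I$. Together with Proposition~\ref{prop:negative_frequencies}, which lets the infimum over $\w\in\R$ be taken over $\w\ge0$, this yields
\[
\int_0^\infty \bigl(u_t^\top y - \nu^\star u_t^\top u_t\bigr)\,\mathrm d\tau \ge 0 .
\]

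One subtlety remains: the left-hand side is an integral over $[0,\infty)$, whereas we need the integral over $[0,t]$. On $(t,\infty)$ we have $u_t = 0$, so the remaining contribution is $\int_t^\infty u_t^\top y = 0$. Hence the $[0,\infty)$ integral equals the $[0,t]$ integral, and the system is dissipative with index $\nu^\star$.

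\textbf{Step 2 (optimality, part (i)).} Suppose $\Ifp > \nu^\star$. Choose a frequency $\w_0$ and a unit vector $v$ such that $v^\trans \sym{H(j\w_0)} v < \Ifp$. By continuity of $H(j\w)$, this inequality persists on a neighborhood of $\pm\w_0$.

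Now take a real input $u(\tau) = \mathrm{Re}\{v e^{j\w_0 \tau}\}\,g(\tau)$ with a long, smooth window $g$. Its Fourier transform concentrates near $\pm\w_0$, using the conjugate symmetry from Proposition~\ref{prop:negative_frequencies}. Parseval then makes $\int_0^\infty (u^\top y - \Ifp u^\top u)$ negative.

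Because $u$ and $y$ are in $L_2$, this integral is the limit of the integrals over $[0,t]$. Some finite $t$ therefore gives a negative truncated supply, which contradicts dissipativity. Consequently $\nu^\star$ is the largest admissible index.

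\textbf{Step 3 (part (ii)).} Here $H^{-1}$ is stable and also proper; properness needs $D$ invertible, which I would assume or obtain from the zero condition. The OFP supply $u^\top y - \Ofp y^\top y$ is exactly the IFP supply for the inverse map $y \mapsto u$.

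By definition, $\ofp(\w)$ is $\ifp(\w)$ of $H^{-1}$. Repeating Steps 1–2 with $y$ as the driving signal, and with $u = H^{-1}y$ in $L_2$, gives $\Ofp = \inf_{\w}\ofp(\w)$. One must still justify that every $L_2$ input $u$ corresponds to an $L_2$ output $y$ and vice versa. This follows because both $H$ and $H^{-1}$ are stable, so the map is a bijection on $L_2$.

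\textbf{Main obstacle.} The hardest part is the optimality argument of Step 2 together with the truncation/causality bookkeeping. Specifically, one must convert the frequency-domain violation into a finite-time violation of $\int_0^t w \ge 0$ with real signals.

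A related difficulty appears in part (ii). There the quadratic form must be posed in $y$ rather than $u$, so one must argue that a general admissible input $u$ still allows the frequency-domain bound. I would handle this via the $L_2$ bijection above.
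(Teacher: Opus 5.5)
The paper gives no argument of its own here; it simply cites Theorems~10 and~11 of Xia et al.\ (2020). Your Parseval proof is therefore a self-contained alternative, and part (i) goes through as you describe. Your observation that the weights in Definition~\ref{def:passivity} are swapped is correct: the statement only holds for the supply rates $u^\top y-\Ifp u^\top u$ and $u^\top y-\Ofp y^\top y$. Truncating the input, together with causality and $\sym{H(j\w)}\succeq\nu^\star I$, gives dissipativity at $\nu^\star$. A long windowed sinusoid at a frequency $\w_0\neq 0$ with $\ifp(\w_0)<\Ifp$, which exists by continuity, gives optimality.

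Part (ii), however, has a gap.

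\begin{itemize}
\item \emph{You use a hypothesis that (ii) does not give you.} You correctly note that the quadratic form lives in $y$ while dissipativity quantifies over inputs $u$. You resolve this by asserting that both $H$ and $H^{-1}$ are stable, so that $u\mapsto y$ is a bijection on $L_2$. But (ii) assumes only that $H^{-1}$ is stable. $H$ may have poles in the open right half-plane, and this is exactly what distinguishes (ii) from (i). For example, $H(s)=(s+2)/(s-1)$ satisfies every hypothesis of (ii) and indeed has $\Ofp=\inf_\w\ofp(\w)=-1/2$, yet it maps $L_2$ inputs to non-$L_2$ outputs.

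\item \emph{A bijection is not the property you need anyway.} By itself it does not tell you that the $L_2$ preimage of the truncated output coincides with $u$ on $[0,t]$. Truncating $u$ instead does not help either: it leaves a tail $-\Ofp\int_t^\infty y^\top y\,\mathrm d\tau$, which is nonnegative when $\Ofp<0$, so nonnegativity on $[0,\infty)$ does not transfer to $[0,t]$.

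\item \emph{The missing ingredient is causality of the inverse.} For admissible $u$ and $t\ge 0$, set $y_t=y\mathbf 1_{[0,t]}\in L_2$ and $\tilde u=H^{-1}y_t$. Because $H^{-1}$ is proper, causal and stable, $\tilde u\in L_2$ and $\tilde u=u$ on $[0,t]$. Hence, for $\Ofp=\inf_\w\ofp(\w)$, $\int_0^t(u^\top y-\Ofp y^\top y)\,\mathrm d\tau=\int_0^\infty(\tilde u^\top y_t-\Ofp y_t^\top y_t)\,\mathrm d\tau\ge 0$, by Parseval and $\sym{H(j\w)^{-1}}\succeq\Ofp I$. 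For optimality, $u=H^{-1}y$ is an admissible input that produces the windowed sinusoid $y$ from $x(0)=0$. Neither direction uses stability of $H$.
\end{itemize}

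Finally, invertibility of $D$ cannot be obtained from the zero condition: $H(s)=1/(s+1)$ has no zeros but $D=0$. It must be assumed, i.e.\ ``$H^{-1}$ stable'' has to mean proper and stable. This is essential rather than cosmetic. For $H(s)=-1/(s+1)$, the inverse $-(s+1)$ has no poles and $\ofp(\w)\equiv-1$. Yet $\int_0^t(uy-\Ofp y^2)\,\mathrm d\tau=-y(t)^2/2-(1+\Ofp)\int_0^t y^2\,\mathrm d\tau$ can be made negative for every $\Ofp$, so that system is not OFP at all.
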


\begin{proof}
	The proof is given in \cite[Theorem 10 and 11]{xia_sector_2020}.
\end{proof}

\begin{remark}[Passivity indices for unstable and non-minimum phase systems]
	Note that as per \autoref{prop:stable_minimum_phase_inverse} in \ref{sec:appendix}, $H(s)$ has to be minimum phase in order for $H(s)^{-1}$ to be stable. Hence, the \gls{ofp} index $\Ofp$ is related to the \gls{fofp} index $\ofp(\w)$ if $H(s)$ is minimum phase, and the \gls{ifp} index $\Ifp$ is related to the \gls{fifp} index $\ifp(\w)$ if $H(s)$ is stable. However, the \gls{fifp} and \gls{fofp} indices are well-defined even if $H(s)$ has poles or zeros with positive real part, as long as all poles and zeros have non-zero real part. In this work, we use the \gls{fifp} and \gls{fofp} indices to ensure stability also in cases where scalar indices do not exist and where prior stability results based on scalar indices are thus not applicable. \hfill $\square$
\end{remark}

\subsection{Stability of Feedback Interconnections}

We proceed by deriving sufficient conditions on the frequency-dependent passivity indices of $H_1(s)$ and $H_2(s)$ which guarantee stability of the feedback interconnection shown in Figure~\ref{fig:feedback_interconnection}.
Our analysis relies on the well-known Nyquist criterion which we recall next.

\begin{theorem}[Nyquist criterion for open-loop stable systems] \label{th:generalized_nyquist}
	Consider the closed-loop system shown in Figure~\ref{fig:feedback_interconnection} and suppose that the rational open-loop transfer function $H(s) \doteq H_1(s) H_2(s) \in \C^{m\times m}$ is proper and stable.
	Then, the closed-loop system is exponentially stable if and only if 
	\begin{itemize}
		\item[(i)] for all $\omega \in [-\infty,\infty]$, the point $-1+j0$ is not an eigenvalue of $H(j\omega)$ and
		\item[(ii)] the eigenloci of $H(j\omega)$ as $\omega$ proceeds from $\omega=-j\infty$ to $\omega = j\infty$ make no counterclockwise encirclements of the point $-1+j0$. \hfill $\square$
	\end{itemize}
\end{theorem}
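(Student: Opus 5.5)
We would prove the criterion by the argument principle applied to the scalar function $f(s) \doteq \det{I + H(s)}$, followed by the standard factorization of $f$ into the eigenvalues of $H$. The first step is to express the closed-loop transfer function through $H$. From Figure~\ref{fig:feedback_interconnection} we have $u_1 = r - y_2$ and $y_2 = H_2 H_1 u_1$, so
\begin{equation*}
H_3(s) = (I + H_2(s)H_1(s))^{-1} H_2(s) H_1(s) = I - (I + H_2(s)H_1(s))^{-1}.
\end{equation*}
Since $\det{I + H_2 H_1} = \det{I + H_1 H_2} = f$, the unstable poles of $H_3$ should be exactly the closed right half-plane zeros of $f$. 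For the direction ``$f$ has no such zeros $\Rightarrow$ $H_3$ stable'', we would write $(I+H_2H_1)^{-1} = \adj{I+H_2H_1}/f$. The adjugate only has poles at poles of $H_1,H_2$, and we use that these lie in the open left half-plane (we read open-loop stability of the loop as applying to $H_2H_1$ as well). For the converse, if $f(s_0)=0$ with $\Re{s_0}\ge 0$, then $\det{(I+H_2H_1)^{-1}} = 1/f$ is unbounded near $s_0$, so some entry of $H_3$ has a pole at $s_0$. Thus closed-loop (exponential) stability is equivalent to $f$ having no zeros with $\Re{s}\ge 0$.

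The second step handles the boundary and infinity, which is where condition (i) enters. If $-1$ is an eigenvalue of $H(j\w_0)$ for some finite $\w_0$, then $f(j\w_0)=0$ and the loop is not stable. If $-1$ is an eigenvalue of $D \doteq \lim_{|s|\to\infty} H(s)$, which exists by properness, then $I+H$ is not invertible at infinity. In that case $(I+H)^{-1}$ is not proper and the loop is ill-posed, so (i) is necessary. Conversely, under (i), $f$ is nonzero on the whole imaginary axis and at infinity. Hence on a D-contour (the segment $[-jR,jR]$ closed by a right half-plane semicircle of radius $R$) $f$ has no zeros on the contour for $R$ large, and its image on the semicircle shrinks to the point $\det{I+D}\neq 0$ as $R\to\infty$. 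The arc therefore contributes no winding.

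The third step is the argument principle. $H$ is stable, so $f$ is analytic in the closed right half-plane and has no poles there ($P=0$). The winding number of $f(j\w)$ around the origin, counted relative to the orientation of the D-contour, therefore equals $Z\ge 0$, the number of right half-plane zeros of $f$. Writing $f(j\w) = \prod_i (1+\lambda_i(j\w))$ with $\lambda_i$ the eigenloci, the winding of $f$ about $0$ is the sum of the windings of the eigenloci about $-1$. A possible eigenvalue crossing only permutes branches, and the total set of closed curves is unaffected. Hence the total encirclement count of $-1$ by the eigenloci equals $Z$, and since $P=0$ it can only occur in one rotational sense.

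The main obstacle is matching this sense to the wording ``counterclockwise'' in (ii), with $\w$ running from $-\infty$ to $\infty$. I would fix the convention explicitly: encirclements are counted in the rotational sense that the argument principle attaches to right half-plane zeros enclosed by the contour, which the statement calls counterclockwise. Then $Z$ equals the number of such encirclements, and because $P=0$ no encirclements of the opposite sense can occur to offset them. Consequently ``no counterclockwise encirclements'' holds if and only if $Z=0$. Combined with step one and step two, this gives the equivalence asserted in the statement. I would spend most care on this orientation bookkeeping and on verifying that the one-signedness ($P=0$) is what makes ``no encirclements in that sense'' equivalent to $Z=0$.
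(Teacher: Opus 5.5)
Your route is sound. The paper gives no argument of its own here; it imports Desoer and Wang's Theorem L3 with $k=1$ and no open-loop right half-plane poles. Your plan is the standard proof behind that result: closed-loop poles $\leftrightarrow$ right half-plane zeros of $f=\mathrm{det}(I+H)$, then the argument principle on the D-contour, then splitting the winding of $f$ over the characteristic loci.

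Requiring stability of $H_2H_1$, and not only $H_1H_2$, is necessary, because $H_3$ is built from $H_2H_1$. Take $H_1=\mathrm{diag}(1,0)$ and let $H_2$ have the single nonzero entry $1/(s-1)$ in position $(2,1)$. Then $H_1H_2=0$, so (i) and (ii) hold trivially. Yet $H_3=(I+H_2H_1)^{-1}H_2H_1=H_2H_1$ has a pole at $s=1$.

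The step to repair is the orientation bookkeeping. You settle it by relabelling rather than computing, and the relabelling is wrong. Running $\omega$ from $-\infty$ to $\infty$ and closing through the right half-plane traverses the D-contour clockwise. Hence the net clockwise winding of $f(j\omega)$ about $0$ is $Z-P=Z$, and the net counterclockwise count is $-Z$. The sense attached to right half-plane zeros is therefore clockwise, not what the statement calls counterclockwise. In the general criterion, the counterclockwise count must equal the number of open-loop right half-plane poles, here zero.

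No relabelling is needed. Condition (ii) must be read as ``the net counterclockwise encirclement count is zero.'' Reading it as ``there are no encirclements in the positive sense'' would make it vacuous, since $-Z\le 0$ always, and sufficiency would then fail. For example, the Nyquist plot of $H(s)=10/(s+1)^3$ circles $-1$ clockwise twice while two closed-loop poles lie in the right half-plane. With the net count equal to $-Z$, (ii) holds if and only if $Z=0$, which is exactly what you need.

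Also drop the claim that ``no encirclements of the opposite sense can occur to offset them.'' $P=0$ only fixes the sign of the total count summed over all loci, which is all your argument uses. It does not control the sense of individual loci.
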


The above theorem is a specialized version of~\citep[Theorem L3]{desoer_generalized_1980} which we have adapted to open-loop stable systems and a gain $k=1$, where $k$ is in the notation of~\citep{desoer_generalized_1980}. 
Similar versions of the generalized Nyquist criterion with conditions on the determinant of $H(s)$ are discussed in \citep{griggs_interconnections_2012} or \citep{skogestad2005multivariable}.
The following theorem is the main result of this work. 

\begin{theorem}[Frequency-dependent passivity stability theorem] \label{th:main}
	Consider two proper systems with stable transfer functions $H_1(s)$ and $H_2(s)$, respectively, and suppose that all zeros of $H_1(s)$ and $H_2(s)$ have non-zero real part.
	Let the \gls{fifp} and \gls{fofp} passivity indices of $H_1(s)$ and $H_2(s)$ be $\ifp_1(\w)$, $\ofp_1(\w)$ and $\ifp_2(\w)$, $\ofp_2(\w)$, respectively. Let
	\begin{subequations} \label{eq:main_cond}
		\begin{align} 
			\ifp_1(\w) + \ofp_2(\w) &> 0 \quad \forall \w \in \R_{\geq 0} \quad \textnormal{and}\label{eq:main_cond_1}\\
			\ifp_2(\w) + \ofp_1(\w) &> 0 \quad \forall \w \in \R_{\geq 0}.\label{eq:main_cond_2}
		\end{align}
	\end{subequations}
	Then, the feedback interconnection of $H_1$ and $H_2$ is stable, if the conditions 
	\begin{equation}\label{item:cond-ii} 
		-1 \notin \spec{H_{\infty}}
	\end{equation}
	hold, where $H_{\infty} \doteq \lim_{\w \rightarrow \infty}H_1(j\w)H_2(j\w)$. 
\end{theorem}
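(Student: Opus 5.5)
The plan is to invoke Theorem~\ref{th:generalized_nyquist} with the open-loop transfer function $H(s) = H_1(s)H_2(s)$. This product is proper and stable because $H_1$ and $H_2$ are proper and stable. It then suffices to show that $-1$ is never an eigenvalue of $H(j\w)$ for $\w \in [-\infty,\infty]$. Suppose this holds. The eigenloci are continuous on the compact frequency interval, so they never pass through $-1$. I will then argue that they make no counterclockwise encirclement of $-1$, using a homotopy: replace $H_2$ by $\tau H_2$ with $\tau \in (0,1]$. The key step is to show that $-1 \notin \spec{\tau H_1(j\w)H_2(j\w)}$ for every $\tau\in(0,1]$ and every $\w$. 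For $\tau \to 0$ the eigenloci collapse to the origin and have zero winding number. By continuity of the winding number of $\det{I+\tau H(j\w)}$ in $\tau$, the winding number at $\tau=1$ is also zero. Equivalently, I can argue directly that $\det{I + \tau H_1H_2}$ never vanishes on the imaginary axis; the Nyquist conditions then follow.

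For finite $\w\ge 0$, suppose $-1$ is an eigenvalue. Then there is $x\neq 0$ with $H_1 H_2 x = -x$, where all matrices are evaluated at $j\w$ and $H_2$ is replaced by $\tau H_2$ if needed. Set $u_2 = x$, $y_2 = \tau H_2 x$, and $y_1 = H_1 y_2 = -x$. Then $y_2 \neq 0$, since otherwise $x = -H_1 y_2 = 0$. Also $H_2(j\w)$ is invertible, because its zeros have nonzero real part. I next use two bounds. The first is $\Re{y_2^\trans H_1 y_2} \ge \ifp_1 |y_2|^2$. The second is $\Re{y_2^\trans x} = \Re{y_2^\trans (\tau H_2)^{-1} y_2} \ge \tfrac{1}{\tau}\ofp_2 |y_2|^2$, which follows from Remark~\ref{rem:ifpofp}. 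Since $H_1 y_2 = -x$, these combine to
\begin{equation*}
0 = \Re{y_2^\trans H_1 y_2} + \Re{y_2^\trans x} \ge \left(\ifp_1(\w) + \tfrac{1}{\tau}\ofp_2(\w)\right)|y_2|^2 .
\end{equation*}
I then need a case split. If $\ofp_2(\w)\ge 0$, then $\ifp_1+\ofp_2/\tau \ge \ifp_1+\ofp_2 > 0$ whenever $\ifp_1 \ge 0$. If $\ifp_1(\w)<0$, condition \eqref{eq:main_cond_1} forces $\ofp_2>-\ifp_1>0$, and the $1/\tau$ factor only increases the term. The bad case is $\ofp_2(\w)<0$, with $1/\tau$ amplifying it. Here Proposition~\ref{prop:same_sign_ifp_ofp} helps: $\ofp_2<0$ implies $\ifp_2<0$, so \eqref{eq:main_cond_2} gives $\ofp_1>-\ifp_2>0$, and hence $\ifp_1>0$ by the same proposition. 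In this case I run the symmetric computation instead. Using $x$ and $\Re{x^\trans \tau H_2 x} \ge \tau\ifp_2|x|^2$ together with $\Re{x^\trans H_1^{-1}x}\ge \ofp_1|x|^2$, and noting $H_1^{-1}x = -\tau H_2 x$, I obtain $0 \ge (\ofp_1+\tau\ifp_2)|x|^2$. This is positive because $\tau\ifp_2 \ge \ifp_2$ when $\ifp_2<0$. I am choosing between the two computations according to the sign pattern, and Proposition~\ref{prop:same_sign_ifp_ofp} guarantees that one of them always applies. Negative frequencies are handled by Proposition~\ref{prop:negative_frequencies}. The invertibility of $H_1(j\w)$ used in the second computation comes from the zero assumption.

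The case $\w=\pm\infty$ is covered by hypothesis \eqref{item:cond-ii} at $\tau=1$. For $\tau<1$, however, the hypothesis does not apply directly, and I expect this to be the main obstacle. The homotopy argument needs $\det{I+\tau H_\infty}\neq0$ for all $\tau\in(0,1]$, which the hypothesis does not state. I would first try a limiting argument: take $\w\to\infty$ in the strict inequalities, using the facts that the bounds pass to limits as non-strict inequalities and that $H_\infty$ is a limit. If that fails, I would avoid the homotopy at infinity. Instead I would count encirclements directly, using the observation that condition (i) at finite frequencies plus the determinant argument shows $\arg\det{I+H(j\w)}$ stays in a half-plane region. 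That would be shown via $\det{I+H} = \det{H_2}\det{H_2^{-1}+H_1}$ and a similar factorization, where the Hermitian part of $H_2^{-1}+H_1$ may fail to be definite, so I would be careful at this point.
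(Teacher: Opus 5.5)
Your finite-frequency argument is correct and, up to reparametrization, is Part (a) of the paper's proof. Excluding $-1$ from $\spec{\tau H_1(j\w)H_2(j\w)}$ for all $\tau\in(0,1]$ is the same as excluding real eigenvalues $\le -1$ of $H_1(j\w)H_2(j\w)$. Your split $\ofp_2(\w)\ge 0$ versus $\ofp_2(\w)<0$ (the latter forcing $\ofp_1(\w)>0$ via Proposition~\ref{prop:same_sign_ifp_ofp}) is the paper's split on which of $\ofp_1(\w),\ofp_2(\w)$ is positive.

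The genuine gap is the one you flag yourself. At $\w=\pm\infty$ your homotopy needs $\det{I+\tau H_\infty}\neq 0$ for every $\tau\in(0,1]$, i.e. $\spec{H_\infty}\cap(-\infty,-1]=\emptyset$, and this does not follow from the hypotheses. Your first fallback cannot work, because letting $\w\to\infty$ in \eqref{eq:main_cond} only gives non-strict inequalities, and these admit boundary cases. Concretely, take $J=\begin{bmatrix}0&1\\-1&0\end{bmatrix}$, $H_1(s)=J+\frac{1}{s+1}I$ and $H_2(s)=2J+\frac{1}{s+1}I$:
\begin{itemize}
\item both are proper and stable, with zeros $-1\pm j$ and $-1\pm j/2$;
\item both are normal at each frequency with $\sym{H_i(j\w)}=\frac{1}{1+\w^2}I$, so all four indices are positive and \eqref{eq:main_cond} holds strictly;
\item the limiting inequalities hold only with equality, $0+0\ge 0$;
\item $H_\infty=-2I$, so \eqref{item:cond-ii} holds, yet $\det{I+\tau H_\infty}=(1-2\tau)^2$ vanishes at $\tau=1/2$.
\end{itemize}
At that gain the scaled loop is ill-posed, and $\w\mapsto\det{I+\tfrac12 H_1(j\w)H_2(j\w)}$ runs into the origin as $\w\to\pm\infty$. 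Continuity of the winding number in $\tau$ therefore tells you nothing across $\tau=1/2$.

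Your second fallback is only a sketch. The factorization $\det{I+H_1H_2}=\det{H_2}\det{H_2^{-1}+H_1}$ does not obviously help: $\det{H_2(j\w)}$ can wind by itself when $H_2$ has right-half-plane zeros (which the theorem allows), and $H_2^{-1}$ is unbounded when $H_2$ is strictly proper.

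The paper avoids a gain homotopy. It argues directly that the eigenloci never meet $(-\infty,-1]$ and hence cannot encircle $-1$. However, its Part (b) handles the endpoints by passing from $-1\notin\spec{H_\infty}$ to ``no real eigenvalue $\le -1$ at $\w=\pm\infty$''. The example above shows this step is not implied, so the paper's text does not supply your missing ingredient either.

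To complete the argument you need one of two things:
\begin{itemize}
\item the stronger assumption $\spec{H_\infty}\cap(-\infty,-1]=\emptyset$, which is automatic if one system is strictly proper since then $H_\infty=0$; or
\item a local analysis at $\w=\infty$ showing that the eigenvalue branches of $H_1(j\w)H_2(j\w)$ converging to a point of $\spec{H_\infty}\cap(-\infty,-1)$ approach it from the upper and lower half-planes in equal numbers, so their net contribution to the winding number is zero.
\end{itemize}
The second is what happens in the example, whose closed loop is in fact stable. Without one of these, condition (ii) of Theorem~\ref{th:generalized_nyquist} is not established.
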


\begin{remark}
	Note that condition \eqref{item:cond-ii} is automatically satisfied if at least one of the systems $H_1(s)$ or $H_2(s)$ is strictly proper, since then $H_{\infty} = H_1(j\w)H_2(j\w) = 0$.
\end{remark}

\begin{proof}
	The proof consists in showing that the conditons of the Nyquist criterion in \autoref{th:generalized_nyquist} hold.
	We proceed as follows: We first show in Part (a) of the proof that $L(j\w) \doteq H_1(j\w) H_2(j\w)$ has no real-valued eigenvalue $\tau \leq -1$ for all $\w \in \R$ due to condition~\eqref{eq:main_cond}. Subsequently, we show in Part (b) that this also holds for $\w \in \{-\infty,\infty\}$.
	Finally, we show in Part (c) that this implies that there is no crossing or encirclement of the -1 and invoke the Nyquist criterion from~\autoref{th:generalized_nyquist} to prove stability.
	
	Part (a): By~\autoref{prop:negative_frequencies}, \eqref{eq:main_cond} implies that
	\begin{subequations}
		\begin{align} 
			\ifp_1(\w) + \ofp_2(\w) &> 0 \quad \forall \w \in \R \quad \text{and}\label{eq:main_cond1_allR}\\
			\ifp_2(\w) + \ofp_1(\w) &> 0 \quad \forall \w \in \R.
		\end{align}
	\end{subequations}
	Note that
	\begin{equation*}
		\spec{H_1(j\w)H_2(j\w)} = \spec{H_2(j\w)H_1(j\w)},
	\end{equation*}	
	because the products $H_1(j\w)H_2(j\w)$ and $H_2(j\w)H_1(j\w)$ are similar matrices~\cite[Th.~1.3.22]{horn_matrix_2017}. 
	In the following, we will show that, for all $\w \in \R$, the matrix $L(j\w) = H_1(j\w) H_2(j\w)$ has no real-valued eigenvalue $\tau \leq -1$.
	Notice that, by Proposition~\ref{prop:same_sign_ifp_ofp}, conditions~\eqref{eq:main_cond} together imply that, for each $\w \in \R$, at least one of $\ofp_1(\w)$ and $\ofp_2(\w)$ must be positive. In the following, we thus distinguish two cases.
	
	Case $\ofp_2(\w)>0$: We first consider the case where $\ofp_2(\w) > 0$ and show that the matrix $H_2(j\w)H_1(j\w)$ has no real-valued eigenvalue ${\tau \leq -1}$.
	The proof proceeds by contradiction and so, for the time being, let us assume that $\tau$ is real valued and that $\tau\leq -1$.
	Consider the eigenvalue-eigenvector equation
	\begin{equation} \label{eq:eig_eq}
		H_2(j\w)H_1(j\w) v = \tau v,
	\end{equation}
	where $\tau \leq -1$ is a real-valued eigenvalue of $H_2(j\w)H_1(j\w)$ and where $v \in \C^{n}$ is the associated eigenvector.	
	We insert the definitions of the~\gls{fifp} and~\gls{fofp} indices into~\eqref{eq:main_cond1_allR} and obtain that, for all $\w \in \R$,
	\begin{equation} \label{eq:derivation_main_1}
		\lmin{H_1(j\w) +  H_1(j\w)^\trans} + \lmin{H_2(j\w)^{-1}  +  (H_2(j\w)^{-1})^{\trans}} > 0.
	\end{equation}
	Recall that, for any $A \in \C^{n\times n}$, $\sym{A} \doteq (A + A^\trans)/2$. Thus, we can rewrite \eqref{eq:derivation_main_1} as
	\begin{equation*}
		\lmin{\sym{H_1(j\w)}}  + \lmin{\sym{H_2(j\w)^{-1}}} > 0 \quad \forall \w \in \R.
	\end{equation*}

	With \autoref{prop:Rayleigh} from the Appendix, we thus obtain that
	\begin{equation*} 
		x^\trans \sym{H_1(j\w)} x + x^\trans \sym{H_2(j\w)^{-1}} x > 0 \quad \forall x \in \C^n.
	\end{equation*}
	Taking into account that $\sym{A} = A - \skew{A}$ for all $A \in \C^{n\times n}$, we obtain that, for all $x \in \C^n$ and for all $\w \in \R$,
	\begin{equation} \label{eq:def_cond_normal}
		x^\trans \left(H_1(j\w) \hspace*{0mm} - \hspace*{0mm} \skew{H_1(j\w)} \hspace*{0mm} + \hspace*{0mm} H_2(j\w)^{-1} \hspace*{0mm} - \hspace*{0mm} \skew{H_2(j\w)^{-1}} \right) x \hspace*{0mm} > \hspace*{0mm} 0.
	\end{equation}	
		
	Recall that, by the assumption stated in the theorem, all zeros of the transfer functions $H_1(s)$ and $H_2(s)$ have non-zero real part.
	Hence, for all $\w \in \R$, $H_1(j\w)$, $H_2(j\w)$, $H_1(j\w)H_2(j\w)$, and $H_2(j\w)H_1(j\w)$ are regular and thus $\tau \neq 0$. 
	Multiplying \eqref{eq:eig_eq} with $H_2(j\w)^{-1}$ from the left-hand side yields
	\begin{equation} \label{eq:eig_1}
		H_1(j\w) v = \tau H_2(j\w)^{-1} v.
	\end{equation}
	Choosing $x = v$ in the definiteness condition \eqref{eq:def_cond_normal} and inserting~\eqref{eq:eig_1} yields
	\begin{equation} \label{eq:contradiction_1}
		(1 + \tau)v^\trans H_2(j\w)^{-1} v - v^\trans \skew{H_1(j\w)} v - v^\trans \skew{H_2(j\w)^{-1}} v > 0.
	\end{equation}
	Since $\ifp_1(\w)+\ofp_2(\w) \in \R$, the left-hand side of \eqref{eq:contradiction_1} must be real valued.
	Recall that $\Re{x^\trans \skew{A} x} = 0$ for all $x \in \C^n$ and for all $A \in \C^{n\times n}$, cf.~\cite[Th.~4.1.3]{horn_matrix_2017}.
	Hence, $\Re{v^\trans \skew{H_1(j\w)}v + v^\trans \skew{H_2(j\w)^{-1}}v} = 0.$
	From~\eqref{eq:contradiction_1}, we thus obtain that
	\begin{equation} \label{eq:contradiction_2}
		(1 + \tau) \Re{v^\trans H_2(j\w)^{-1} v} > 0.
	\end{equation}
	However, if $\ofp_2(\w) = \lmin{\sym{H_2(j\w)^{-1}}}>0$, then $\Re{x^\trans H_2(j\w)^{-1} x}  = x^\trans \sym{H_2(j\w)^{-1}} x > 0$ for all $x \in \C^n$.
	Consequently, \eqref{eq:contradiction_2} cannot hold for any real-valued $\tau \leq -1$, i.e., the matrix $H_2(\w)H_1(\w)$ has no real valued eigenvalue $\tau \leq -1$ if $\ofp_2(\w) > 0$.
	
	Case $\ofp_1(\w)>0$: Repeating the same arguments for \eqref{eq:main_cond_2} leads to the conclusion that $H_1(j\w)H_2(j\w)$ has no real-valued eigenvalue $\tau \leq -1$ as long as $\ofp_1(\w) > 0$. Since for all $\w \in \R$, either $\ofp_1(\w)$ or $\ofp_2(\w)$ must be larger than zero, we conclude that $H_1(j\w)H_2(j\w)$ has no real-valued eigenvalue $\tau \leq -1$ for all $\w \in \R$. This completes part (a) of the proof.
		
	Part (b): With the assumption of proper transfer matrices, the limit matrix $\lim_{\w \to \infty} H_1(j\w)H_2(j\w)$ exist. If condition~\eqref{item:cond-ii} holds, then $H_1(j\w)H_2(j\w)$ has no real-valued eigenvalue $\tau = -1$ at $\w = \infty$. Furthermore, observe that $\lim_{\w \to \infty} H_1(j\w)H_2(j\w) = \lim_{\w \to -\infty} H_1(j\w)H_2(j\w)$, since the entries of $H_1(j\w)H_2(j\w)$ are proper rational functions which always have equal limits as $\w \to \pm\infty$. Hence, together with Part (a), $H_1(j\w)H_2(j\w)$ has no real-valued eigenvalue $\tau \leq -1$ for $\w \in [-\infty, \infty]$.
	
	Part (c): By combining the results from parts (a) and (b), we obtain that the eigenloci of $L(j\w)$ as $\w$ proceeds from $\w=-j\infty$ to $w=j\infty$ do not cross the point $-1+j0$.
	Furthermore, the eigenloci are continuous in $\w = [-\infty,\infty]$ and they do not cross the real axis to the left of the point $-1+j0$.
	Thus, the eigenloci do not encircle the point $-1+j0$.
	Moreover, since $H_1(s)$ and $H_2(s)$ are stable by the assumptions stated in the theorem, $L(s) = H_1(s)H_2(s)$ is also stable.
	Thus, we can apply the Nyquist Theorem~\ref{th:generalized_nyquist} and obtain that the feedback interconnection shown in Figure~\ref{fig:feedback_interconnection} is stable.
\end{proof}

The following corollary is a special case of \autoref{th:main}, in which one of the systems has an excess of \gls{ofp}, i.e. $\ofp(\w) > 0$ for all frequencies $\w \in \R_{\geq 0}$. Then, one of the two conditions in \eqref{eq:main_cond} can be dropped, and only one condition has to be checked to ensure stability of the feedback interconnection.

\begin{corollary}[Stable feedback interconnection with one passive system] \label{cor:main_result_passive}
	Consider two proper systems with stable transfer functions $H_1(s)$ and $H_2(s)$, respectively, and suppose that all zeros of $H_1(s)$ and $H_2(s)$ have non-zero real part. 
	Let at least one of the systems be strictly proper and denote the frequency-dependent \gls{ifp} and \gls{ofp} passivity indices of $H_1(s)$ and $H_2(s)$ as $\ifp_1(\w)$ and $\ofp_2(\w)$, respectively.
	Furthermore, suppose that $\ofp_2(\w) > 0$ for all $\w \in \R_{\geq 0}$.
	The feedback interconnection of $H_1(s)$ and $H_2(s)$ is stable if
	\begin{equation} \label{cond_corollary}
		\nu_1(\w) + \rho_2(\w) > 0
	\end{equation}
	holds for all frequencies $\w \in \R_{\geq 0}$.
\end{corollary}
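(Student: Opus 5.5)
The plan is to reduce the corollary to \autoref{th:main}: check that its hypotheses \eqref{eq:main_cond_1}, \eqref{eq:main_cond_2} and \eqref{item:cond-ii} all hold. Properness, stability of $H_1(s)$ and $H_2(s)$, and the location of their zeros are assumed verbatim. Condition \eqref{eq:main_cond_1}, $\ifp_1(\w)+\ofp_2(\w)>0$ for all $\w\in\R_{\geq 0}$, is exactly the hypothesis \eqref{cond_corollary}. Condition \eqref{item:cond-ii} follows from the Remark after \autoref{th:main}. Since one of the systems is strictly proper, $H_\infty = \lim_{\w\to\infty} H_1(j\w)H_2(j\w)=0$, and therefore $-1\notin\spec{H_\infty}$.

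What remains is \eqref{eq:main_cond_2}, $\ifp_2(\w)+\ofp_1(\w)>0$, which is not assumed. I would \emph{not} try to derive it. Instead I would rerun the proof of \autoref{th:main} and observe that \eqref{eq:main_cond_2} is used only in the case $\ofp_2(\w)\le 0$, to get $\ofp_1(\w)>0$.

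Part (a) of that proof splits each frequency into the cases $\ofp_2(\w)>0$ and $\ofp_1(\w)>0$. Here $\ofp_2(\w)>0$ holds for every $\w\in\R_{\geq0}$, and by \autoref{prop:negative_frequencies} also for every $\w\in\R$. So the first case always applies, and it relies only on \eqref{eq:main_cond_1}, extended to all of $\R$ by \autoref{prop:negative_frequencies}. That case shows:
\begin{itemize}
\item $H_2(j\w)H_1(j\w)$ has no real eigenvalue $\tau\le -1$;
\item equivalently, since the spectra of $H_1(j\w)H_2(j\w)$ and $H_2(j\w)H_1(j\w)$ coincide, $L(j\w)$ has none either.
\end{itemize}
The argument is the same as before: substitute $H_1(j\w)v=\tau H_2(j\w)^{-1}v$ into the definiteness inequality and take real parts. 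This gives $(1+\tau)\Re{v^\trans H_2(j\w)^{-1}v}>0$, which is impossible for $\tau\le-1$ because $\sym{H_2(j\w)^{-1}}\succ0$.

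Parts (b) and (c) of the proof then go through unchanged. At $\w=\pm\infty$ the loop is $L=0$, so it has no eigenvalue $-1$. The eigenloci never meet the real half-line $(-\infty,-1]$, hence they cannot encircle $-1+j0$. \autoref{th:generalized_nyquist} then applies to the stable product $H_1(s)H_2(s)$, and the interconnection is stable.

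The only point needing care is this: the statement of \autoref{th:main} formally assumes both conditions, so one must argue through the structure of its proof rather than cite it as a black box. Strict properness is also what removes any need for a separate check of \eqref{item:cond-ii}.
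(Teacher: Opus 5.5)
Your proposal is correct and follows essentially the same route as the paper. Since $\ofp_2(\w)>0$ at every frequency, only the first case of Part (a) in the proof of \autoref{th:main} is ever used, so \eqref{eq:main_cond_2} is never needed, and strict properness gives $H_\infty=0$ for Parts (b)--(c). Your write-up is, if anything, slightly tighter than the paper's: you use $\sym{H_2(j\w)^{-1}}\succ 0$ directly, which is exactly what the contradiction requires, whereas the paper cites only the weaker fact that the eigenvalues of $H_2(j\w)^{-1}$ have positive real part.
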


\begin{proof}
	By the assumptions stated in the theorem, we have that $\rho_2(\w) > 0$ for all $\w \geq 0$. This implies, following \autoref{lem:pos_def_of_hermitian_implies_pos_real_part}, that all eigenvalues of $H_2(j\w)^{-1}$ have strictly positive real parts for all $\w \in \R$. Thus, condition \eqref{cond_corollary} alone guarantees that $H_1(j\w)H_2(j\w)$ has no real-valued eigenvalues $\tau \leq -1$. Stability of the feedback interconnection then follows via the same arguments as in \autoref{th:main}.
\end{proof}

\begin{remark}
	The condition in \autoref{cor:main_result_passive} is equivalent to the conditions presented in \cite{chen_limitations_2024} (and references therein). Consequently, the existing literature conditions represent a special case of \autoref{th:main}, and our results provide greater generality.
\end{remark}

\subsection{Interpretation and Comparison to Existing Passivity-based Feedback Stability Theorems}
In this section, we compare \autoref{th:main} to a prior, well known passivity-based feedback stability theorem which we recall next.

\begin{proposition}[{Stability under scalar passivity indices~\cite[Theorem 2.33]{sepulchre_constructive_1997}}] \label{prop:existing_stability_theorems}
	Consider two LTI systems of form~\eqref{eq:ss} with transfer functions $H_1(s)$ and $H_2(s)$, respectively. 
	Suppose that the systems are \gls{if-ofp} with indices $\Ofp_1, \Ifp_1 \in \R$ and $\Ofp_2, \Ifp_2 \in \R$, respectively.
	If  
	\begin{subequations} \label{eq:existing_conditions}
		\begin{align}
			\Ofp_1 + \Ifp_2 &> 0 \\
			\Ofp_2 + \Ifp_1 &> 0,
		\end{align}
	\end{subequations}
	then the feedback interconnection shown in Figure~\ref{fig:feedback_interconnection} is stable.
\end{proposition}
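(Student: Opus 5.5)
This is the classical passivity theorem with scalar indices, so I will prove it with the standard time-domain dissipation-inequality argument. The goal is an $\mathcal{L}_2$ finite-gain bound from $r$ to the loop signals, which for rational LTI transfer functions gives the pole condition in Definition~\ref{def:stable_feedback_interconnection}. Throughout I use the supply rate $w_{\textnormal{P}}$ as written in Definition~\ref{def:passivity}, with $\Ifp$ multiplying $y^\top y$ and $\Ofp$ multiplying $u^\top u$.

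\textbf{Step 1: add the two dissipation inequalities.} For system $i$ we have
\[
\int_0^t \big(u_i^\top y_i - \Ifp_i y_i^\top y_i - \Ofp_i u_i^\top u_i\big)\,\mathrm{d}\tau \ge 0 .
\]
The loop gives $u_1 = r - y_2$ and $u_2 = y_1$. Substituting, the cross terms $y_1^\top y_2$ cancel, and we get
\[
\int_0^t r^\top y_1 \,\mathrm{d}\tau \ \ge\ \int_0^t \big(\Ifp_1 y_1^\top y_1 + \Ofp_1 (r-y_2)^\top(r-y_2) + \Ifp_2 y_2^\top y_2 + \Ofp_2 y_1^\top y_1\big)\,\mathrm{d}\tau .
\]
The left side is bounded by Cauchy--Schwarz.

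\textbf{Step 2: obtain a coercive quadratic form.} Expand $\Ofp_1 (r-y_2)^\top(r-y_2) = \Ofp_1 y_2^\top y_2 - 2\Ofp_1 r^\top y_2 + \Ofp_1 r^\top r$. This yields the quadratic form
\[
(\Ifp_1+\Ofp_2)\|y_1\|_t^2 + (\Ifp_2+\Ofp_1)\|y_2\|_t^2 ,
\]
whose coefficients are exactly the quantities in \eqref{eq:existing_conditions}, and hence both are positive. Let $\epsilon>0$ be the smaller of the two. The remaining terms are linear in $y$ times $r$, plus $\Ofp_1\|r\|_t^2$ of either sign. Applying Young's inequality gives
\[
\epsilon\big(\|y_1\|_t^2+\|y_2\|_t^2\big) \le c\,\|r\|_t^2 + \tfrac{\epsilon}{2}\big(\|y_1\|_t^2+\|y_2\|_t^2\big),
\]
so truncated norms of $y_1$, $y_2$ (and therefore of $u_1$, $u_2$) are bounded by a constant times $\|r\|_t$, uniformly in $t$. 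This is finite-gain $\mathcal{L}_2$ stability of the map $r\mapsto y_2$.

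\textbf{Step 3: pass to the transfer function $H_3$.} For rational LTI maps, a finite $\mathcal{L}_2$ gain means $H_3\in\mathcal{H}_\infty$, i.e.\ $H_3$ has no poles with $\Re{s}\ge 0$. Two points must be handled here.

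\emph{Well-posedness.} We need $\det(I+D_1D_2)\ne0$. I would obtain it from the static version of Step 2: at high frequency the dissipation inequalities hold for the feedthrough matrices. The argument of Part (a) of Theorem~\ref{th:main} then rules out $-1$ as an eigenvalue of $D_1D_2$.

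\emph{Imaginary-axis poles.} A bounded $\mathcal{L}_2$ gain already excludes closed right-half-plane poles, including poles on the imaginary axis, since rational functions with such poles are unbounded in $\mathcal{H}_\infty$.

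\textbf{Main obstacle.} The hardest part is Step 3 in the presence of non-minimal realizations and the zero initial state assumed in \eqref{eq:ss}. With $x(0)=0$, stability of the input-output map is exactly what is needed, because Definition~\ref{def:stable_feedback_interconnection} concerns only the poles of $H_3$. I would therefore cite the standard equivalence between finite-gain $\mathcal{L}_2$ stability and properness plus analyticity of the transfer matrix in the closed right half-plane, rather than reprove it. This also matches \cite[Theorem 2.33]{sepulchre_constructive_1997}, which the statement cites.
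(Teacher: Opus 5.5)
Your proof is correct. The paper gives no argument of its own here: it cites the state-space formulation in Sepulchre et al., which rests on storage functions, and points in Remark~\ref{rem:existing_stability_theorems} to an input--output variant in van der Schaft. Your proof is a self-contained version of that input--output variant:
\begin{itemize}
\item Summing the two integral inequalities cancels the cross terms.
\item You correctly track the paper's unusual convention that $\Ifp$ weights $y^\top y$ and $\Ofp$ weights $u^\top u$. The resulting coefficients $\Ifp_1+\Ofp_2$ and $\Ifp_2+\Ofp_1$ are exactly the two quantities in \eqref{eq:existing_conditions}.
\item Young's inequality gives a finite $\mathcal{L}_2$ gain, and the standard $\mathcal{RH}_\infty$ characterization turns this into the pole condition on $H_3$.
\end{itemize}
This route fits the paper's setting better than a storage-function/Lyapunov proof. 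Definition~\ref{def:dissipativity} is the storage-free Hill--Moylan notion with $x(0)=0$, Definition~\ref{def:stable_feedback_interconnection} concerns only the poles of $H_3$, and non-minimal realizations are allowed. So you need no storage functions, no zero-state detectability, and no open-loop stability of $H_1,H_2$, which the statement does not assume. What a state-space argument buys in addition is a conclusion about $x_1,x_2$, which the paper deliberately does not claim.

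A minor correction concerns the well-posedness step. Your idea of using the static version of Step~2 is right, but the justifications you attach to it are not:
\begin{itemize}
\item Part~(a) of Theorem~\ref{th:main} is phrased via $\lambda_{\min}$ of the Hermitian parts of $H_1$ and $H_2^{-1}$, and needs $H_2(j\w)$ invertible. The supply-rate coefficients provide neither; for instance, $D_2$ may be singular.
\item The high-frequency limit needs extra justification when $H_i$ is unstable.
\end{itemize}
Use a short-pulse argument instead. Apply a constant input $u_0$ on $[0,\delta]$ from $x(0)=0$. Then $y=D_iu_0+O(\delta)$, and dividing the dissipation inequality by $\delta$ and letting $\delta\to0$ gives $u_0^\top D_iu_0-\Ifp_i\|D_iu_0\|^2-\Ofp_i\|u_0\|^2\ge0$.

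Now suppose $(I+D_2D_1)v=0$ for some real $v\neq0$. Take the static loop with $r=0$, $u_1=v$, $u_2=y_1=D_1v$, $y_2=-v$. Summing the two static inequalities gives $0\ge(\Ifp_1+\Ofp_2)\|D_1v\|^2+(\Ifp_2+\Ofp_1)\|v\|^2>0$, a contradiction.

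This check must come before Step~1, since Steps~1--2 presuppose that the loop signals exist in $\mathcal{L}_{2e}$.
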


\begin{remark}[Feedback stability theorems based on passivity] \label{rem:existing_stability_theorems}
	Proposition~\ref{prop:existing_stability_theorems} can be found in standard textbooks, e.g. in~\citep[Theorem 2.33]{sepulchre_constructive_1997} where it is stated for nonlinear state space systems.
	However, there exist equivalent variants for general input-output maps,~cf.~\cite[Theorem 2.2.18]{vanderschaft_l-gain_2017}.
	An overview of results based on different passivity notions is presented in~\cite[Chapter 5]{lozano2013dissipative}. \hfill $\square$
\end{remark}

Note that Theorem~\ref{th:main} is restricted to LTI systems.
In contrast, the prior results listed in Remark~\ref{rem:existing_stability_theorems} also apply to nonlinear systems.
However, for LTI systems, Theorem~\ref{th:main} allows for frequency-dependent passivity compensation, whereas theorems based on scalar passivity indices require a compensation of the worst-case lack and excess of passivity of both systems to infer stability.
This can be seen by comparing Conditions~\eqref{eq:main_cond} of Theorem~\ref{th:main} and Conditions~\eqref{eq:existing_conditions} of \autoref{prop:existing_stability_theorems}.
By \autoref{prop:connection_frequency_ofp}, the scalar \gls{ifp} and \gls{ofp} passivity indices in \eqref{eq:existing_conditions} are bounded above by the infimum of the frequency-dependent passivity indices \gls{fifp} and \gls{fofp} in \eqref{eq:main_cond}, respectively.
This is a key aspect of the novelty of our result, which allows to guarantee stability in cases where conditions~\eqref{eq:existing_conditions} on the scalar passivity indices do not hold.
Consequently, \autoref{th:main} provides a less conservative approach when studying feedback interconnections of LTI systems than prior results based on scalar passivity indices.
The price to pay, however, is that \autoref{th:main} does not address feedback interconnections of nonlinear systems.

\section{Illustrative Example} \label{sec:example}

Consider the feedback interconnection shown in Figure~\ref{fig:feedback_interconnection} with transfer functions

\begin{align*}
    H_1(s) &= \begin{bmatrix} \frac{s + 2}{s + 1} & \frac{0.55}{s^2 + 0.2s + 0.5} \\ 0 & \frac{s + 2}{s + 1} \end{bmatrix} \quad \text{and} \\
    H_2(s) &= \begin{bmatrix} \frac{s - 0.2}{s^2 + 1.2s + 0.6} & \frac{-2.7s-4}{s + 4} \\ 1 & \frac{s + 0.2}{s^2 + 1.3s + 0.1} \end{bmatrix}
\end{align*}
which are stable, do not have purely imaginary zeros, and meet condition~\eqref{item:cond-ii} of \autoref{th:main}. 
To infer stability of the feedback interconnection via \autoref{th:main}, we next assess whether conditions~\eqref{eq:main_cond} of \autoref{th:main} hold, i.e., that the sum of the frequency-dependent passivity indices of $H_1(s)$ and $H_2(s)$ is positive for all frequencies $\w \in \R_{\geq 0}$.

The frequency-dependent passivity indices of both systems over the frequency range $\omega \in [10^{-3}, 10^{4}]$ rad/s are depicted in Figures~\ref{fig:indices_H1} and~\ref{fig:indices_H2}, respectively.
It can be seen that both systems exhibit frequency ranges with negative frequency-dependent passivity indices. Thus, according to \autoref{prop:connection_frequency_ofp}, both systems have a lack of \gls{ifp} and \gls{ofp} in terms of their scalar passivity indices $\Ofp$ and $\Ifp$.
However, Figure~\ref{fig:indices_sum} shows that the sum of the frequency-dependent passivity indices of both systems stays positive over the whole frequency range.
Consequently, the conditions of \autoref{th:main} are satisfied and the feedback interconnection is guaranteed to be stable.

\begin{figure}[h]
    \centering
    \includegraphics{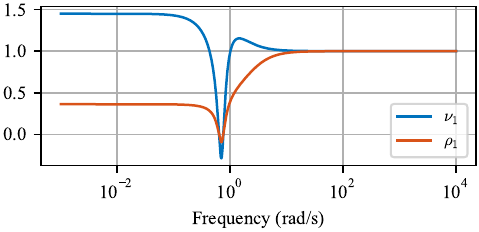}
    \caption{Frequency-dependent passivity indices of system $H_1(s)$.}
    \label{fig:indices_H1}
\end{figure}

\begin{figure}[h]
    \centering
    \includegraphics{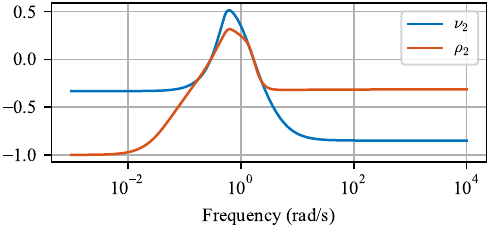}
    \caption{Frequency-dependent passivity indices of system $H_2(s)$.}
    \label{fig:indices_H2}
\end{figure}

\begin{figure}[h]
    \centering
    \includegraphics{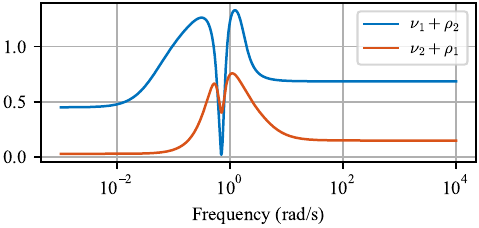}
    \caption{Sum of frequency-dependent passivity indices of systems $H_1(s)$ and $H_2(s)$.}
    \label{fig:indices_sum}
\end{figure}

In this small-scale example, $H_1(s)$ and $H_2(s)$ are non-passive, but the frequency ranges where their frequency-dependent passivity indices are negative do not overlap.
The frequency-dependent passivity indices of system $H_1(s)$ are positive everywhere except for frequencies close to 0.6 rad/s, and vice versa for $H_2(s)$.
Even though both systems are non-passive, their feedback interconnection is stable, as guaranteed by~\autoref{th:main}.
As we show next, the given example is not covered by existing feedback stability theorems based on passivity theory such as \autoref{prop:existing_stability_theorems} and \autoref{rem:existing_stability_theorems} which rely on scalar passivity indices.

In contrast to the \gls{ofp} and \gls{ifp} cases, the scalar passivity indices $\Ofp$ and $\Ifp$ of an \gls{if-ofp} system are not unique, because the inequality in Definition~\ref{def:passivity}~(iv) may hold for various combinations of $\Ofp$ and $\Ifp$.
Figure~\ref{fig:scalar_indices_1} shows such combinations of $\Ofp_1$ and $\Ifp_1$ for which $H_1(s)$ is \gls{if-ofp}.
That is, given a sequence of $40$ values for $\Ifp_1$, the plot shows the respective maximum value for $\Ofp_1$ such that $H_1(s)$ is \gls{if-ofp}, which we compute by solving a semidefinite program in the time domain resulting from \cite[Proposition 4.1.2]{vanderschaft_l-gain_2017}.
Likewise, Figure~\ref{fig:scalar_indices_2} shows the maximum value for $\Ifp_2$ as function of $\Ofp_2$ such that $H_2(s)$ is \gls{if-ofp}.
For both systems, the figures show that increasing the excess of \gls{ifp} comes at a cost of increasing the lack of \gls{ofp}.

To assess whether any of the sampled $(\Ifp_1,\Ofp_1)$ and $(\Ifp_2,\Ofp_2)$ pairs from Figures~\ref{fig:scalar_indices_1} and~\ref{fig:scalar_indices_2} meet the conditions of \autoref{prop:existing_stability_theorems}, we define the indicator function
\[
\mathcal{I} \doteq
\begin{cases}
    2, & \text{if } \Ofp_1 + \Ifp_2 > 0 \; \text{ and } \Ofp_2 + \Ifp_1 > 0, \\
	1, & \text{if either } \Ofp_1 + \Ifp_2 > 0 \; \text{ or } \; \Ofp_2 + \Ifp_1 > 0, \\
	0, & \text{else.}
\end{cases}
\]
Figure~\ref{fig:indicator} shows that either $\mathcal{I} = 0$ or $\mathcal{I} = 1$ for the 1600 combinations resulting of the pairs in Figures~\ref{fig:scalar_indices_1} and~\ref{fig:scalar_indices_2}, i.e., it appears that the conditions of \autoref{prop:existing_stability_theorems} do not hold for the considered example.

This demonstrates that the stability of the feedback interconnection cannot be explained by existing passivity-based stability theorems with constant indices for all frequencies, but it can be explained by \autoref{th:main} with frequency-dependent indices.

\begin{figure}[h]
    \centering
    \includegraphics{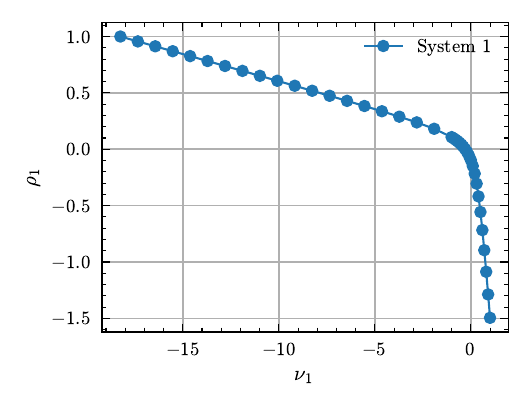}
    \caption{Scalar passivity indices for $H_1(s)$}
    \label{fig:scalar_indices_1}
\end{figure}

\begin{figure}[h]
    \centering
    \includegraphics{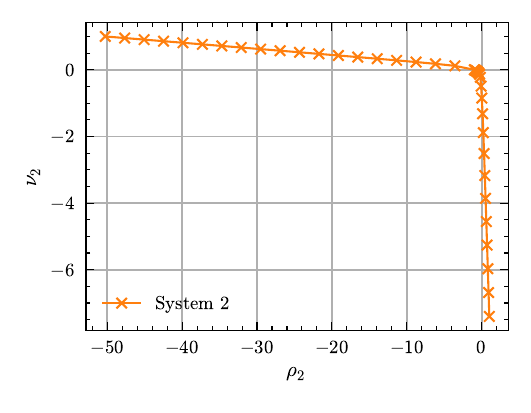}
    \caption{Scalar passivity indices for $H_2(s)$}
    \label{fig:scalar_indices_2}
\end{figure}

\begin{figure}[h]
    \centering
    \includegraphics{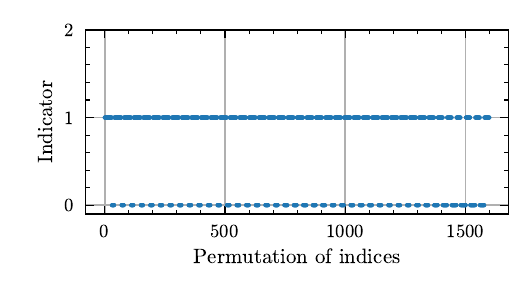}
    \caption{Indicator}
    \label{fig:indicator}
\end{figure}
\section{Conclusions and Future Work}

In this paper, we have developed a novel feedback stability theorem for linear time-invariant systems based on frequency-dependent passivity indices. We have demonstrated, that the proposed theorem is less conservative than existing stability theorems based on scalar passivity indices. In particular, the proposed theorem allows both systems to be non-passive, by requiring only a frequency-dependent compensation of lack and excess of passivity. This extends classic passivity-based stability theorems, which require a compensation on the scalar passivity indices for all frequencies.

The proposed stability conditions do not require full system-level analysis and can be verified by each subsystem individually using only the own subsystem dynamics. This allows the application of the theorem to areas where system dynamics of the subsystems are partially unknown, e.g. in power systems. Future work will focus on extending the proposed stability theorem to arbitrary interconnections of systems, and on the design of compensators for achieving desired frequency-dependent passivity indices.

{
\bibliographystyle{elsarticle-harv} 
\bibliography{references.bib}
}

\appendix

\section{Auxiliary Propositions} \label{sec:appendix}

\begin{proposition}[Rayleigh quotient bounds] \label{prop:Rayleigh}
	Let $A\in \C^{n \times n}$ be a hermitian matrix. Then,
	\begin{equation}
		\lambda_{\min}(A) \leq \frac{ x^\trans A x }{x^\trans x}\leq \lambda_{\max}(A)  \quad \forall x \in \C^n.
	\end{equation}
	\hfill $\square$
\end{proposition}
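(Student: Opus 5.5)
The statement to prove is the Rayleigh quotient bound for a hermitian matrix $A$, presumably for $x \neq 0$; the case $x=0$ is excluded implicitly since the quotient is undefined there. The approach is the spectral theorem. Since $A = A^\trans$, there is a unitary $U \in \C^{n\times n}$ and a real diagonal matrix $\Lambda = \mathrm{diag}(\lambda_1,\dots,\lambda_n)$ with $A = U \Lambda U^\trans$. The eigenvalues $\lambda_i$ are real, so $\lambda_{\min}(A)$ and $\lambda_{\max}(A)$ are well defined as the smallest and largest of them. This is the only external fact I would invoke, citing e.g.\ \cite[Th.~4.1.5]{horn_matrix_2017}.

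Next, fix $x \in \C^n\setminus\{0\}$ and set $y \doteq U^\trans x$. Since $U$ is unitary, $y \neq 0$ and $y^\trans y = x^\trans U U^\trans x = x^\trans x > 0$. Then
\begin{equation*}
x^\trans A x = y^\trans \Lambda y = \sum_{i=1}^n \lambda_i |y_i|^2 ,
\end{equation*}
which is real. Because each weight $|y_i|^2 \geq 0$, each term satisfies $\lambda_{\min}(A)|y_i|^2 \leq \lambda_i |y_i|^2 \leq \lambda_{\max}(A)|y_i|^2$. Summing over $i$ gives $\lambda_{\min}(A)\, y^\trans y \leq x^\trans A x \leq \lambda_{\max}(A)\, y^\trans y$. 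Dividing by $x^\trans x = y^\trans y > 0$ yields the claim.

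There is no real obstacle here. The only points needing care are that $x^\trans A x$ is real, so that the inequalities make sense (this follows from $A$ being hermitian, as noted in the paper's notation section), and that the statement must be read for $x\neq 0$. I would state that restriction explicitly. I would also remark that, in the form used in the proof of Theorem~\ref{th:main}, the bound reads $x^\trans A x \geq \lambda_{\min}(A)\, x^\trans x$, which holds trivially for $x = 0$ as well.
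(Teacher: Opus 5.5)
Your proof is correct and is essentially the paper's argument: the paper simply cites \cite[Th.~4.2.2]{horn_matrix_2017}, whose proof is exactly this unitary diagonalization $A = U\Lambda U^\trans$, with $x^\trans A x / x^\trans x$ then written as a convex combination of the eigenvalues of $A$. Your explicit restriction to $x \neq 0$ is a correct and worthwhile tightening of the statement as written.
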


\begin{proof}
	A proof is given in \cite[Th.~4.2.2]{horn_matrix_2017}. Equality holds if $x$ is an eigenvector corresponding to the smallest (or largest) eigenvalue $\lambda_{\min}(A)$ (or $\lambda_{\max}(A)$).
\end{proof}

\begin{proposition}[Eigenvalue bounds] \label{prop:real_part_eig_bounded}
	Consider a matrix $A \in \C^{n\times n}$ and its hermitian part $\sym{A}$. The real part of the eigenvalues of $A$ are bounded by the smallest and largest eigenvalue of $\sym{A}$, i.e.
	\begin{equation} \label{eq:statement_real_part_eig_bounded}
		\Re{\lambda(A)} \in [\lambda_{\min}(\sym{A}), \lambda_{\max}(\sym{A})]
	\end{equation}
	for all eigenvalues $\lambda(A)$ of $A$. \hfill $\square$
\end{proposition}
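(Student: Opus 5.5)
Fix an eigenvalue $\lambda$ of $A$ with a unit eigenvector $v \in \C^n$, so that $Av = \lambda v$ and $v^\trans v = 1$. The plan is to express $\Re{\lambda}$ as a Rayleigh quotient of the hermitian part $\sym{A}$ and then apply \autoref{prop:Rayleigh}.

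\textbf{Step 1.} Multiplying $Av = \lambda v$ from the left by $v^\trans$ gives $v^\trans A v = \lambda$.

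\textbf{Step 2.} Taking the complex conjugate of this scalar gives $v^\trans A^\trans v = \overline{\lambda}$.

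\textbf{Step 3.} Averaging the two identities yields
\begin{equation*}
v^\trans \sym{A} v = \frac{1}{2}\left(v^\trans A v + v^\trans A^\trans v\right) = \frac{1}{2}(\lambda + \overline{\lambda}) = \Re{\lambda}.
\end{equation*}
Equivalently, one can split $A = \sym{A} + \skew{A}$ and note that $v^\trans \skew{A} v$ is purely imaginary, as recalled in the notation section.

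\textbf{Step 4.} Since $\sym{A}$ is hermitian and $v^\trans v = 1$, \autoref{prop:Rayleigh} applied to $\sym{A}$ and $x = v$ gives
\begin{equation*}
\lambda_{\min}(\sym{A}) \leq v^\trans \sym{A} v \leq \lambda_{\max}(\sym{A}).
\end{equation*}
Combining this with Step 3 yields \eqref{eq:statement_real_part_eig_bounded}.

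There is no real obstacle in this argument. The only point needing care is Step 2: the conjugate of the scalar $v^\trans A v$ equals $v^\trans A^\trans v$, and this holds because a scalar's conjugate transpose is just its conjugate. This is the same identity as \eqref{eq:complex_number_real_part_identity} used earlier in the paper.
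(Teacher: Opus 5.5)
Your proof is correct and follows essentially the same route as the paper. Both proofs write $\Re{\lambda}$ as the Rayleigh quotient $v^\trans \sym{A} v / v^\trans v$ of the hermitian part at an eigenvector and then invoke \autoref{prop:Rayleigh}; the only cosmetic difference is that you normalize $v$ rather than dividing by $v^\trans v$.
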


\begin{proof}
	Let $\lambda \in \mathbb{C}$ be an eigenvalue of $A$ with corresponding eigenvector $v \in \mathbb{C}^n$, i.e. $Av = \lambda v$.
	Multiplying from the left-hand side with $v^\trans$ yields $v^\trans A v = \lambda v^\trans v$.
	Dividing both sides by $v^\trans v$ and taking the real part gives leads to
	\begin{equation} \label{eq:derivation_1}
		\Re{\lambda} = \frac{\Re{v^\trans A v}}{v^\trans v}.
	\end{equation}
	Observe that 
	\[\Re{v^\trans A v} = \frac{1}{2}\biggl(v^\trans A v + (v^\trans A v)^\trans \biggr) = v^\trans \sym{A} v.\]
	Inserting the last equation into~\eqref{eq:derivation_1} yields
	\begin{equation}\label{eq:derivation_2}
		\Re{\lambda} = \frac{v^\trans \sym{A} v}{v^\trans v}.
	\end{equation}
	Observe that the right-hand side of~\eqref{eq:derivation_2} is a Rayleigh quotient of the hermitian matrix $\sym{A}$. Thus, by applying Proposition~\ref{prop:Rayleigh}, we obtain statement~\eqref{eq:statement_real_part_eig_bounded}.
\end{proof}

\begin{corollary}[Hermitian part and positive definiteness] \label{lem:pos_def_of_hermitian_implies_pos_real_part}
	Consider a matrix $A \in \C^{n\times n}$ and its hermitian part $\sym{A}$. If $\sym{A}$ is positive (semi)definite, then all eigenvalues of $A$ have a positive (nonnegative) real part. \hfill $\square$
\end{corollary}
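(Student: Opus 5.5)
The statement to prove is the appendix corollary: if $\sym{A}$ is positive (semi)definite, then every eigenvalue of $A$ has positive (nonnegative) real part. I would derive it directly from \autoref{prop:real_part_eig_bounded}, which already contains the whole content.

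First, let $\lambda$ be any eigenvalue of $A$. By \autoref{prop:real_part_eig_bounded}, $\Re{\lambda} \geq \lambda_{\min}(\sym{A})$. Second, since $\sym{A}$ is hermitian, its eigenvalues are real. Positive definiteness then means all of them are strictly positive, so $\lambda_{\min}(\sym{A}) > 0$. Positive semidefiniteness means $\lambda_{\min}(\sym{A}) \geq 0$. Chaining the two facts gives $\Re{\lambda} > 0$, respectively $\Re{\lambda} \geq 0$, for every eigenvalue $\lambda$ of $A$.

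If one wants the argument self-contained, the same conclusion follows from the identity in the proof of \autoref{prop:real_part_eig_bounded}. Take an eigenpair $Av = \lambda v$ with $v \neq 0$. Then $\Re{\lambda} = v^\trans \sym{A} v / v^\trans v$, and definiteness of $\sym{A}$ fixes the sign of the numerator directly.

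There is no real obstacle here. The only point needing care is that the eigenvector $v$ is nonzero, so that $v^\trans v > 0$, and that the strict inequality in the definite case is preserved. Both are immediate.

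The paper also uses the negative-definite analogue (in Part (a) of the proof of \autoref{prop:same_sign_ifp_ofp}). I would note that it follows by applying the statement to $-A$, since $\sym{-A} = -\sym{A}$ and the eigenvalues of $-A$ are those of $A$ negated.
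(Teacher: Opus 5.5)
Your proof is correct and is the paper's own argument. Like the paper, you bound $\Re{\lambda}$ from below by $\lambda_{\min}(\sym{A})$ using \autoref{prop:real_part_eig_bounded}, and positive (semi)definiteness gives $\lambda_{\min}(\sym{A}) > 0$ (respectively $\geq 0$); your extra remark that the negative-definite case follows by applying the result to $-A$ is also correct.
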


\begin{proof}
	If $\sym{A}$ is positive (semi)definite, then $\lambda_{\min}(\sym{A}) > 0$ ($\lambda_{\min}(\sym{A}) \geq 0$), and thus by \eqref{eq:statement_real_part_eig_bounded} all eigenvalues of $A$ have a positive (nonnegative) real part.
	Thus, the statement is a direct consequence of \autoref{prop:real_part_eig_bounded}.
\end{proof}

\begin{remark}[Hermitian part and negative definiteness]
	Note that \autoref{lem:pos_def_of_hermitian_implies_pos_real_part} also holds in the negative (semi)definite case, i.e., if $\sym{A}$ is negative (semi)definite, then all eigenvalues of $A$ have a negative (nonpositive) real part. \hfill $\square$
\end{remark}

\begin{proposition}[Stability of the inverse transfer function] \label{prop:stable_minimum_phase_inverse}
	Consider a transfer function matrix $H(s)\in\mathbb{C}^{n\times n}$.
	If $H(s)$ is minimum phase, then $H(s)^{-1}$ is stable.
\end{proposition}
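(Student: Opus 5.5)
The plan is to use the adjugate formula $H(s)^{-1} = \adj{H(s)}/\det{H(s)}$ together with the pole/zero polynomials of Definition~\ref{def:poles_zeros}, and show that every pole of $H(s)^{-1}$ is a zero of $H(s)$. The statement is then immediate: minimum phase means all zeros of $H(s)$ have negative real part, so all poles of $H(s)^{-1}$ do too, and $H(s)^{-1}$ is stable by Definition~\ref{def:stable_tf}.

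\textbf{Step 1: a formula for the entries of the inverse.} Since $H$ is minimum phase, $z(s)\not\equiv 0$, so by \eqref{eq:poles_zeros} $\det{H(s)} = z(s)/p(s)$ is not identically zero. Hence $H(s)^{-1}$ exists as a rational matrix. Each entry of $\adj{H(s)}$ is, up to sign, an $(n-1)\times(n-1)$ minor $M_{ij}(s)$ of $H(s)$. By definition of the pole polynomial, $p(s)$ is a common denominator of all minors, so $M_{ij}(s) = q_{ij}(s)/p(s)$ for some polynomial $q_{ij}$. Therefore
\begin{equation*}
(H(s)^{-1})_{ji} = \pm\frac{q_{ij}(s)/p(s)}{z(s)/p(s)} = \pm\frac{q_{ij}(s)}{z(s)}.
\end{equation*}
Every entry of $H^{-1}$ thus has denominator dividing $z(s)$, after cancellation.

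\textbf{Step 2: poles of $H^{-1}$ are zeros of $H$.} The main obstacle is that "stable" is defined through the pole polynomial of $H^{-1}$, i.e.\ the least common denominator of \emph{all} minors of all orders of $H^{-1}$, not just its entries. I would handle this with Jacobi's complementary minor identity. For a $k\times k$ minor of $H^{-1}$ with rows $I$ and columns $J$, it gives
\begin{equation*}
\det{H^{-1}[I,J]} = \pm\frac{\det{H[J^c,I^c]}}{\det{H}}.
\end{equation*}
Here $\det{H[J^c,I^c]} = r(s)/p(s)$ for some polynomial $r$, again because $p$ is a common denominator of all minors of $H$. So this minor equals $\pm r(s)/z(s)$. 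The empty minor ($k=n$) gives $1/\det{H} = p(s)/z(s)$.

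Hence $z(s)$ is a common denominator of all minors of $H^{-1}$. The least common denominator, which is the pole polynomial of $H^{-1}$, divides $z(s)$. Its roots are therefore contained in the roots of $z$, all of which lie in the open left half-plane.

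For the scalar case $n=1$, this reduces to $H^{-1} = p/z$, and the conclusion is immediate. If invoking Jacobi's identity is unwelcome, the fallback is to argue entrywise via Step 1. This is enough because any minor of $H^{-1}$ is a polynomial in its entries, so its denominator divides a power of $z$. That version needs no Jacobi identity, and I would likely present it as the main argument.
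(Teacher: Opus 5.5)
Your proposal is correct and follows essentially the paper's route: write $H(s)^{-1} = \frac{p(s)}{z(s)}\adj{H(s)}$ and use that $p(s)$ clears the denominators of all minors in the adjugate, so every pole of $H(s)^{-1}$ is a root of $z(s)$. Your Step~2 (Jacobi's identity, or more simply that every minor of $H(s)^{-1}$ is a polynomial in its entries) and your remark that minimum phase forces $z(s)\not\equiv 0$ make explicit two points the paper leaves implicit: the pole polynomial of $H(s)^{-1}$ involves minors of all orders, not just entries, and $H(s)^{-1}$ must exist in the first place.
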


\begin{proof}
	The inverse of a matrix $H(s)$ can be computed via the adjugate matrix $\adj{H(s)}$ and the determinant $\det{H(s)}$ as \cite[Sec.~0.8.3]{horn_matrix_2017}
	\begin{equation} \label{eq:inverse_tf}
		H(s)^{-1} = \frac{1}{\det{H(s)}} \adj{H(s)} \stackrel{\eqref{eq:poles_zeros}}{=} \frac{p(s)}{z(s)} \adj{H(s)},
	\end{equation}
	where $p(s)$ and $z(s)$ are the pole and zero polynomials of $H(s)$, respectively, as in \autoref{def:poles_zeros}. The adjugate matrix $\adj{H(s)}$ is composed of all the minors of $H(s)$. As per \autoref{def:poles_zeros}, the pole polynomial $p(s)$ is defined as the least common denominator of all minors of $H(s)$. Thus, no entry of the matrix $p(s) \adj{H(s)}$ has any poles. As a consequence, all the poles of the inverse transfer function \eqref{eq:inverse_tf} are given by $z(s)$. If the $H(s)$ is minimum phase, then $z(s)$ has no roots in the closed right half-plane, and thus $H(s)^{-1}$ in \eqref{eq:inverse_tf} is stable.
\end{proof}

\end{document}

\endinput